\documentclass[reqno]{amsart}
\usepackage{hyperref}

\usepackage{amsmath,amssymb} 
\pdfoutput = 1 
\usepackage{amsmath}

\usepackage{euscript}
\usepackage[T2A]{fontenc}
\usepackage[utf8]{inputenc}

\usepackage{textcomp}
\usepackage{mathtools}

\usepackage{amsthm}
\theoremstyle{plain}
\newtheorem{theorem}{Theorem}[section]

\theoremstyle{definition}

\newtheorem{primer}{Example}[section]

\theoremstyle{remark}

\numberwithin{equation}{section}

\allowdisplaybreaks

\RequirePackage{caption}
\usepackage{graphicx}
\graphicspath{}
\DeclareGraphicsExtensions{.pdf,.png,.jpg}

\setcounter{tocdepth}{2}

\begin{document}
	
	\title[ Exact solution Ising-like $2 \times 2 \times \infty$ models with multispin interactions 
	]
	{ Exact solution Ising-like $2 \times 2 \times \infty$ models with multispin interactions  
	}
	\author[P. Khrapov]{Pavel Khrapov}
	\address{Pavel Khrapov \\ Department of Mathematics
		\\ Bauman Moscow State Technical University \\  ul. Baumanskaya 2-ya, 5/1, Moscow \\ 105005, Moscow,  Russian Federation}  
	\email{khrapov@bmstu.ru , pvkhrapov@gmail.com }

	\author[V. Nikishina]{Veronika Nikishina}
    \address{Veronika Nikishina \\ Department of Mathematics
    	\\ Bauman Moscow State Technical University \\  ul. Baumanskaya 2-ya, 5/1, Moscow \\ 105005, Moscow,  Russian Federation}  
    \email{nikishinava@student.bmstu.ru , veronika\_nikishina98@mail.ru}

	\subjclass[2010]{82B20, 82B23}
	\keywords{Ising model, Hamiltonian, multispin interaction, transfer matrix, exact solution, partition function, free energy, phase transition
    .}
	
	\begin{abstract}
		 
		The paper considers the generalized Ising model in the $2\times2\times\infty$ strip with a Hamiltonian invariant with respect to the central axis of rotation through the angle $\pi/2$, which includes all possible multiplicative interactions of an even number of spins in the unit cube.The exact value of free energy and heat capacity in the thermodynamic limit is found.  For percolation invariant with respect to rotation about the central axis of rotation through the angle $\pi/2$, a limit relation for the non-percolation probability is derived. The solution was obtained by the transfer matrix method. In the general case, finding the largest eigenvalue of a $16 \times 16$ transfer matrix reduces to solving a quartic equation using the Ferrari method. In special cases, switching the transfer matrix of models with auxiliary matrices made it possible to reduce the problem to quadratic and linear equations. 
		 A separate chapter is devoted to the gonigendrial model in the $2\times2\times\infty$ strip, the exact values of the free energy and heat capacity in the thermodynamic limit are found for the case of free boundary conditions and an analogue of cyclically closed boundary conditions in both directions perpendicular to the central axis of rotation. The numerical solution for the $3\times3\times\infty$ gonigendrial model in the case of free boundary conditions and in the case of cyclic closure is compared with the corresponding analytical solutions in the volume $2\times2\times\infty$. The closeness of free energy and heat capacity in the thermodynamic limit for the gonigendrial model on the entire three-dimensional lattice and for the analog of cyclic closure in the $2\times2\times\infty$ strip is shown. The exact value of the free energy and heat capacity in the thermodynamic limit in the $2\times2\times\infty$ strip is obtained for the Antiferromagnetic layered Ising model, the similarity of these physical characteristics with the numerical solution in an infinite volume is shown.
		 
	\end{abstract}
	
	\maketitle
	
	\tableofcontents

	\section{Introduction}\label{in}
	
	The Ising model was introduced back in the 20s of the 20-th entury \cite{Ising} to describe magnetic transitions, but is still the subject of intense research.  In particular, the Ising model with interactions of nearest neighbors ($NN$), next nearest neighbors ($NNN$), and plaquette interactions ($P$) on
	cubic lattice was considered as a simple model of the statistical mechanics of random surfaces \cite{Cappi,Karowski}, microemulsions \cite{Seto}, where critical fluctuations in the density of water droplets were studied, as well as discretized string action (the so-called gonigendrial model) \cite{Ambartzumian}-\cite{Nishiyama}.

	The Ising model on a $2\times2\times\infty$ lattice with a simple cubic cell and under the condition of equal interactions in both transverse directions can be reduced to an $n\times\infty$ Ising model consisting of $n$ cells closed into a cylinder, if we put the number of cells equal to $n=4$. The exact value of the partition function and free energy for the nearest neighbor Ising model without an external magnetic field on the $n\times\infty$ cylinder was first obtained by Onsager \cite{Onsager}, and the result for lattice $2\times2\times\infty$
	generalized to the case of different interaction coefficients in all three directions in \cite{Yurishchev,Yurishchev_1997}.In work \cite{Ratner} all 16 eigenvalues for the 3D Ising model are found under the condition that the interactions in both transverse directions are equal. The interaction of next nearest neighbors is taken into account in \cite{Krijanovskii}, where the eigenvalues of the transfer matrix of the Ising model are found.
	
	 The exact value of free energy and heat capacity for the Ising model with NN, NNN interactions was obtained in \cite{Yokota_Terufumi_1988} and \cite{Yokota_Terufumi_1989}, and phase diagrams were plotted for various ratios of parameters. 
	   The author of \cite{Yurishchev_2007} presents exact analytical formulas for free energy for two $2\times2\times\infty$ Ising models. The first model \cite{Yurishchev_2007} has a simple cubic lattice with fully anisotropic interactions. The second model \cite{Yurishchev_2007} consists of two different types of linear chains and includes non-intersecting diagonal bonds on the side faces of the $2\times2\times\infty$ parallelepiped. In both cases, the solutions are expressed in terms of square radicals and are obtained using the obvious symmetry of the Hamiltonians and the hidden algebraic symmetry of the characteristic polynomial of the transfer matrix.  
	The two-dimensional Ising model has been intensively studied, the Hamiltonian of which additionally takes into account the interaction of the next nearest neighbors \cite{Binder},\cite{Kalz},\cite{dos Anjos}. A three-dimensional analog of this model was studied by authors of \cite{Ramazanov} using the Monte Carlo method.

 	In \cite{Coniglio} percolation in the Ising model is studied, based on the generalized results of \cite{Miyamoto}. Based on percolation in a finite-width strip for an independent field and the Ising model, results are obtained in \cite{Khrapov_P_V}, for continuous models in
	\cite{Minlos_R_A_Khrapov_P_V}. A meaningful review and results on percolation theory can be found in \cite{Men'shikov_M_V_Molchanov_S_A_Sidorenko_A_F}.
	In recent years, interest in simple systems of Ising spins has arisen as a result of considering the random surface model in the context of string theory \cite{Savvidy,Johnston}. This model was called gonigendrial and its three-dimensional case was studied by approximation methods 	
	\cite{Pelizzola}, Monte Carlo method \cite{Dimopoulos}, using differential operators \cite{dosAnjos} and using Novotny’s transfer-matrix method \cite{Nishiyama}. 
		
	In this paper, we consider the generalized Ising model on the lattice $2\times2\times\infty$ with a Hamiltonian invariant with respect to the central axis of rotation on $\pi/2$, which includes all possible interactions of an even number of spins: double, quadruple, six-fold and eight-dimensional interactions . For a model with the most general Hamiltonian of this type, the exact value of the free energy in the thermodynamic limit is written out. Let us note that finding the free energy in the thermodynamic limit is reduced to finding the largest eigenvalue of the transfer matrix, which is the root of the fourth degree equation. However, in the same way, we can find all the eigenvalues of the transfer matrix, this makes it possible to obtain the partition function and free energy in a finite closed chain $2\times2\times L$.
	The authors introduced a special type of percolation, which is invariant under rotation through the angle $\pi/2$, and describes the limiting behavior of the non-percolation probability. Next, we consider two special cases in which the search for the largest eigenvalue of the transfer matrix is reduced to solving a quadratic equation. An exact solution is obtained for the gonigendrial model with free boundary conditions and in the cyclically closed case. An exact solution of the Antiferromagnetic layered Ising model is found.
	This work is a logical continuation of the works 
	\cite{Minlos_R_A_Khrapov_P_V}, \cite{Khrapov_P_V},
	 \cite{Khrapov1}, \cite{Khrapov2},  \cite{Khrapov_Golosov_Nikishina} according to Ising-like models.
	
	There is a following structure in the article.
	The chapter \ref{Model_description} introduces the basic concepts: model Hamiltonian, partition function, thermodynamic characteristics and invariant percolation. In chapter \ref{exact_sol}, using the method of transfer matrices and taking into account the commutation of the resulting transfer matrix with the "rotation" operator $\;$ on $\pi/2$, expressions for the thermodynamic characteristics of the model are written. In chapter \ref{sec_perc} the non-percolation probability is written out, which is obtained taking into account the invariance of the introduced type of percolation with respect to the rotation by $\pi/2$. The chapter \ref{Simplest cases} is devoted to the consideration of some of the simplest cases of solution, which are obtained taking into account the commutation of a particular form of the transfer matrix with some other matrices. In the \ref{gonihedric} chapter, an analytical solution is written for the generalized gonigendrial model in the case of free boundary conditions and in the case of an analogue of cyclic closure, a numerical solution is found in a certain range of parameters in the volume $3\times3\times\infty$ for free boundary conditions and in the case of cyclic closure. The numerical solution for the $3\times3\times\infty$ gonigendrial model in the case of free boundary conditions and in the case of cyclic closure is compared with the corresponding analytical solutions. The closeness of free energy and heat capacity in the thermodynamic limit is shown for the gonigendrial model on the entire three-dimensional lattice and for the analogue of cyclic closure on the lattice $2\times2\times\infty$. In chapter \ref{NN_NNN_interactions} the exact value of free energy and heat capacity in the thermodynamic limit for the Antiferromagnetic layered Ising model is found.

		\section{Model description} 
	\label{Model_description}
	
	\subsection{Model Hamiltonian and Physical Characteristics}  \label{Ham}
	
	Let us consider a three-dimensional square cyclically closed lattice model with size $2\times 2\times M$ (Fig.1),
	\begin{equation}\label{L22M}
		\mathcal{L}_{2,2,M}=\{t_i^m, m=0,1,...,M-1,t_i^M \equiv t_i^0, i=0,1,2,3\},
	\end{equation}
	the total number of the lattice sites is $N=4M$.

	\begin{center}
		\begin{picture}(250,210) 
			\put(0,50){\line(1,0){250}}
			\put(0,86){\line(1,0){250}}
			\put(0,150){\line(1,0){250}}
			\put(0,186){\line(1,0){250}}
			\put(52,50){\line(0,1){100}}
			\put(100,86){\line(0,1){100}}
			\put(152,50){\line(0,1){100}}
			\put(200,86){\line(0,1){100}}
			\put(52,50){\line(4,3){48}}
			\put(152,50){\line(4,3){48}}
			\put(52,150){\line(4,3){48}}
			\put(152,150){\line(4,3){48}}
			\put(48,47){\line(-4,-3){20}}
			\put(104,89){\line(4,3){20}}
			\put(148,47){\line(-4,-3){20}}
			\put(204,89){\line(4,3){20}}
			\put(48,147){\line(-4,-3){20}}
			\put(104,189){\line(4,3){20}}
			\put(148,147){\line(-4,-3){20}}
			\put(204,189){\line(4,3){20}}
			\put(52,50){\circle*{4}}
			\put(100,86){\circle*{4}}
			\put(54,40){$t^m_0$}
			\put(102,76){$t^m_1$}
			\put(152,50){\circle*{4}}
			\put(200,86){\circle*{4}}
			\put(154,40){$t^{m+1}_0$}
			\put(202,76){$t^{m+1}_1$}
			\put(52,150){\circle*{4}}
			\put(100,186){\circle*{4}}
			\put(54,140){$t^m_3$}
			\put(102,176){$t^m_2$}
			\put(152,150){\circle*{4}}
			\put(200,186){\circle*{4}}
			\put(154,140){$t^{m+1}_3$}
			\put(202,176){$t^{m+1}_2$}
			\label{fig1}
			\put(75,60){$x$}
			\put(45,95){$y$}
			\put(110,40){$z$}
		\end{picture}
	\end{center}
	\begin{center}
		FIGURE 1. Arrangement of spins in a strip and with a cross section $2\times2$.
		The product of transfer matrices $\tau$ is taken from left to right
	\end{center}
	
	Let's assume that a particle is localed at each site $t_i^m$ . The state of the particle is determined by the value of the
	spin $\sigma^{m}_i\equiv\sigma_{t_i^m}, i=0,1,2,3, m=0,1,...,M-1$. 
	
	Let
	\begin{equation}
		\Omega=\{t_{i_1}^{m_1},t_{i_2}^{m_2},...,t_{i_k}^{m_k}\}\subset\mathcal{L}_{2,2,M}.
	\end{equation}

	Let's introduce 
	\begin{equation}
		\sigma_\Omega=\prod_{t\in\Omega}\sigma_t.
	\end{equation}
	
	Let $\Omega^m_C=\{t_{i}^{m},t_{i}^{m+1},i=0,1,2,3\}$ be a cube.
	We introduce the set of elementary generating supports of the Hamiltonian
		\begin{equation}\label{Phi^m}
		\begin{gathered}
			\Phi_2^m=\{  \{t_{0}^{m},t_{1}^{m} \},  \{t_{0}^{m},t_{2}^{m} \},  \{t_{0}^{m+1},t_{1}^{m+1} \},  \{t_{0}^{m+1},t_{2}^{m+1} \},\\ \{t_{0}^{m},t_{0}^{m+1} \}, \{t_{0}^{m},t_{1}^{m+1} \}, \{t_{0}^{m},t_{2}^{m+1} \}, \{t_{0}^{m},t_{3}^{m+1} \}\}, \\
				\Phi_4^m=\{ \{t_{0}^{m},t_{0}^{m+1} , t_{1}^{m+1},t_{2}^{m+1} \}, 
			  \{t_{0}^{m},t_{1}^{m+1} ,t_{2}^{m+1},t_{3}^{m+1} \},\\
			   \{t_{0}^{m},t_{2}^{m+1} , t_{3}^{m+1},t_{0}^{m+1} \},
			    \{t_{0}^{m},t_{3}^{m+1} , t_{0}^{m+1},t_{1}^{m+1} \},\\
 \{t_{0}^{m},t_{1}^{m} , t_{0}^{m+1},t_{1}^{m+1} \}, 
 \{t_{0}^{m},t_{1}^{m} , t_{1}^{m+1},t_{2}^{m+1} \},\\ 
  \{t_{0}^{m},t_{1}^{m} , t_{2}^{m+1},t_{3}^{m+1} \}, 
   \{t_{0}^{m},t_{1}^{m} , t_{3}^{m+1},t_{0}^{m+1} \}, \\
    \{t_{0}^{m},t_{1}^{m} , t_{0}^{m+1},t_{2}^{m+1} \}, 
\{t_{0}^{m},t_{1}^{m} , t_{1}^{m+1},t_{3}^{m+1} \}, \\
\{t_{0}^{m},t_{2}^{m} , t_{0}^{m+1},t_{1}^{m+1} \}, 
\{t_{0}^{m},t_{2}^{m} , t_{1}^{m+1},t_{2}^{m+1} \},\\ 
\{t_{0}^{m},t_{2}^{m} , t_{2}^{m+1},t_{3}^{m+1} \}, 
\{t_{0}^{m},t_{2}^{m} , t_{3}^{m+1},t_{0}^{m+1} \}, \\
\{t_{0}^{m},t_{2}^{m} , t_{0}^{m+1},t_{2}^{m+1} \}, 
\{t_{0}^{m},t_{2}^{m} , t_{1}^{m+1},t_{3}^{m+1} \}, \\
\{t_{0}^{m},t_{1}^{m} , t_{2}^{m},t_{0}^{m+1} \}, 
\{t_{0}^{m},t_{1}^{m} , t_{2}^{m},t_{1}^{m+1} \},\\ 
\{t_{0}^{m},t_{1}^{m} , t_{2}^{m},t_{2}^{m+1} \}, 
\{t_{0}^{m},t_{1}^{m} , t_{2}^{m},t_{3}^{m+1} \}, \\
\{t_{0}^{m},t_{1}^{m} , t_{2}^{m},t_{3}^{m} \}, 
\{t_{0}^{m+1},t_{1}^{m+1} , t_{2}^{m+1},t_{3}^{m+1} \}, \\
\Phi_6^m=\{\{ \Omega^m_C\setminus \{t_{0}^{m},t_{1}^{m} \}\},  \{ \Omega^m_C\setminus \{t_{0}^{m},t_{2}^{m} \}\},  \\
\{ \Omega^m_C\setminus \{t_{0}^{m+1},t_{1}^{m+1} \}\},  \{ \Omega^m_C\setminus \{t_{0}^{m+1},t_{2}^{m+1} \}\},\\ \{ \Omega^m_C\setminus \{t_{0}^{m},t_{0}^{m+1} \}\}, \{ \Omega^m_C\setminus \{t_{0}^{m},t_{1}^{m+1} \}\}, \{ \Omega^m_C\setminus \{t_{0}^{m},t_{2}^{m+1} \}\},\\
\{ \Omega^m_C\setminus \{t_{0}^{m},t_{3}^{m+1} \}\}\}, \\
\Phi_8^m=\{\{t_{0}^{m},t_{1}^{m} , t_{2}^{m},t_{3}^{m},
t_{0}^{m+1},t_{1}^{m+1} , t_{2}^{m+1},t_{3}^{m+1} \}\}, \\
\Phi^m=\Phi_2^m \cup \Phi_4^m \cup \Phi_6^m \cup \Phi_8^m  .
		\end{gathered}	
	\end{equation}
	
	Let us define the elementary generating Hamiltonian
		\begin{equation}
		\begin{gathered} 
			\mathcal{\hat{H}}^m=\mathcal{\hat{H}}^m_2+\mathcal{\hat{H}}^m_4+\mathcal{\hat{H}}^m_6+\mathcal{\hat{H}}^m_8,	
		\end{gathered} 	
	\end{equation}
	where 
	\begin{equation}
		\begin{gathered} 
			\mathcal{\hat{H}}_r^m=-\sum_{{\alpha}^m  \in \Phi_r^m }{J_{{\alpha}^m }} \sigma_{{\alpha}^m },	
			r=2,4,6,8,
		\end{gathered} 	
	\end{equation}
	components of the Hamiltonian responsible for the interaction of an even number of $r$ spins in the cube of $\Omega^m_C$.
	
	For function $f(\sigma^{m}_0,\sigma^{m}_1,\sigma^{m}_2,\sigma^{m}_3,\sigma^{m+1}_0,\sigma^{m+1}_1,\sigma^{m+1}_2,\sigma^{m+1}_3)$, depending on $\sigma_t, t\in\Omega^m_C$, we introduce the operator of rotation $R_{\pi/2}^m$  by an angle ${\pi/2}$
	\begin{equation}\label{R}
		\begin{gathered}
		R^m_{\pi/2}(f(\sigma^{m}_0,\sigma^{m}_1,\sigma^{m}_2,\sigma^{m}_3,\sigma^{m+1}_0,\sigma^{m+1}_1,\sigma^{m+1}_2,\sigma^{m+1}_3))=\\
		f(\sigma^{m}_1,\sigma^{m}_2,\sigma^{m}_3,\sigma^{m}_0,\sigma^{m+1}_1,\sigma^{m+1}_2,\sigma^{m+1}_3,\sigma^{m+1}_0).
\end{gathered}	
\end{equation}
	
	The Hamiltonian model has the form: 
	\begin{equation}\label{hamiltonian}
	\begin{gathered} 
		\mathcal H=\sum_{m=0}^{M-1}\mathcal{H}^m,			
	\end{gathered} 	
	\end{equation}
where 
\begin{equation}\label{H_m}
	\begin{gathered} 
	 \mathcal{H}^m=\mathcal{\hat{H}}^m+R^m_{\pi/2}\mathcal{\hat{H}}^m+(R^m_{\pi/2})^2\mathcal{\hat{H}}^m+(R^m_{\pi/2})^3\mathcal{\hat{H}}^m.		
	\end{gathered} 	
\end{equation}

	The partition function can be written in the following form:
	
	\begin{equation}\label{part_func}
		\begin{gathered}
			Z_{2,2,M}=\sum_{\sigma}\exp(-\mathcal H/(k_BT)), 
		\end{gathered}
	\end{equation}
	summation perfomed over all spins.  
	
	Let us determine the physical characteristics of the system in the thermodynamic limit in accordance with\cite{baxter2016}.  The free energy of the system per one lattice site is determined in the standard way	
	
	\begin{equation}\label{free_energy}
			f(T)=-k_BT\lim_{N\to \infty} N^{-1}\ln{Z_{2,2,M}(T)},
	\end{equation}
	where $ N=4M$ is the number of sites of the considered lattice.
	
	The internal energy per one lattice site is equal to 
	
	\begin{equation}\label{int_energy}
		u(T)=-T^2\frac{\partial}{\partial T}[f(T)/T].
	\end{equation}
	
	The heat capacity per one lattice site is defined as
	\begin{equation}\label{heat_capacity}
		C(T)=\frac{\partial}{\partial T}u(T)=-(2T\frac{\partial}{\partial T}[f(T)/T]+T^2\frac{\partial^2}{\partial T^2}[f(T)/T]).
	\end{equation}
	
	\subsection{Invariant percolation}  \label{Perc}
	Spin configurations in the cyclically closed strip $\mathcal{L}_{2,2,M}$ (\ref{L22M}) in which there is a column or row of $-1$: $\sigma_0^m=\sigma_{1}^m=-1$, $\sigma_0^m=\sigma_{3}^m=-1$,  $\sigma_1^m=\sigma_{2}^m=-1$, $\sigma_2^m=\sigma_{3}^m=-1$ for some $m \in \{0,...,M-1\}$, will be called percolation configurations.
	Accordingly, the non-percolation probability is the probability that there will be no percolation in the model.
	The probability of non-percolation of the model is determined by the ratio of two quantities:
	\begin{equation}\label{no_percolation}
		P_M=\frac{Z'_{2,2,M}}{Z_{2,2,M}},
	\end{equation}
	where ${Z_{2,2,M}}$ is the partition function  (\ref{part_func}) of the model with the followind Hamiltonian  (\ref{hamiltonian}), ${Z'_{2,2,M}}$ - analogue of the partition function, in which the summation is performed only over non-percolation configurations.

	\section{Exact solution for free energy in the strip $2 \times 2 \times \infty$ }\label{exact_sol}

	With considering (\ref{hamiltonian}),(\ref{H_m}) the partition function (\ref{part_func}) can be written as:

	\begin{equation}\label{new_part_func}
			Z_{2,2,M}=
			\sum_{\sigma}\exp(-\sum_{m=0}^{M-1}\mathcal{H}^m/(k_BT) ). 
	\end{equation}

	Let us introduce a transfer matrix $\theta$ with size $16\times16$ with matrix elements (Fig.2 )
	   \begin{equation}\label{teta_k,l}
	   	\begin{gathered}
	   	\theta_{k,l}= 	\theta_{(\sigma_0^m,\sigma_1^m,\sigma_2^m,\sigma_3^m),(\sigma_0^{m+1},\sigma_1^{m+1},\sigma_2^{m+1},\sigma_3^{m+1})}=\exp(-\sum_{m=0}^{M-1}\mathcal{H}^m/(k_BT) ) ,\\ 
	   		k=(1-\sigma_0^m)/2+(1-\sigma_1^m)+
	   		2 (1-\sigma_2^m) + 4 (1-\sigma_3^m), \\ 
	   		l=(1-\sigma_0^{m+1})/2+(1-\sigma_1^{m+1})+
	   		2 (1-\sigma_2^{m+1}) + 4 (1-\sigma_3^{m+1}).\\	   	
	   	\end{gathered}
	   \end{equation}
	   
	      Then
	      
	      	\begin{equation}\label{new_part_func2}
	      	Z_{2,2,M}= tr (\theta^M). 
	      \end{equation}
	      
	   To find the free energy (\ref{free_energy}) in the thermodynamic limit, it suffices to find the largest eigenvalue $\lambda_{\max}$ of the $\theta$ transfer matrix. Further, it will be shown that this largest eigenvalue $\lambda_{\max}$ is the largest root of the characteristic equation of the matrix $\tau$ with size $4\times4$:
	       \begin{equation}\label{tau}
	       	\begin{gathered}		
	       		\tau =\begin {psmallmatrix} 
	       		\begin{tabular}{p{1.5cm} p{3cm} p{1.5cm}  p{1.5cm}} \tiny{
	       				\begin{center}$\theta_{1,1}+\theta_{1,16}$ \end{center}}& \tiny{\begin{center}$\theta_{1,2}+\theta_{1,3}+\theta_{1,5}+\theta_{1,8}+ \theta_{1,9}+\theta_{1,12}+\theta_{1,14}+\theta_{1,15}$ \end{center}}& \tiny{\begin{center}$\theta_{1,4}+\theta_{1,7}+\theta_{1,10}+\theta_{1,13}$ \end{center}}&\tiny{\begin{center} $\theta_{1,6}+\theta_{1,11}$\end{center}}\\
	       			\tiny{\begin{center}$\theta_{2,1}+\theta_{2,16}$ \end{center}}&\tiny{ \begin{center}$\theta_{2,2}+\theta_{2,3}+\theta_{2,5}+\theta_{2,8}+ \theta_{2,9}+\theta_{2,12}+\theta_{2,14}+\theta_{2,15}$ \end{center}}&\tiny{ \begin{center}$\theta_{2,4}+\theta_{2,7}+\theta_{2,10}+\theta_{2,13}$ \end{center}}&\tiny{\begin{center} $\theta_{2,6}+\theta_{2,11}$\end{center}}\\
	       			\tiny{\begin{center}$\theta_{4,1}+\theta_{4,16}$ \end{center}}&\tiny{ \begin{center}$\theta_{4,2}+\theta_{4,3}+\theta_{4,5}+\theta_{4,8}+ \theta_{4,9}+\theta_{4,12}+\theta_{4,14}+\theta_{4,15}$ \end{center}}&\tiny{ \begin{center}$\theta_{4,4}+\theta_{4,7}+\theta_{4,10}+\theta_{4,13}$ \end{center}}&\tiny{\begin{center} $\theta_{4,6}+\theta_{4,11}$\end{center}}\\
	       			\tiny{\begin{center}$\theta_{6,1}+\theta_{6,16}$ \end{center}}& \tiny{\begin{center}$\theta_{6,2}+\theta_{6,3}+\theta_{6,5}+\theta_{6,8}+ \theta_{6,9}+\theta_{6,12}+\theta_{6,14}+\theta_{6,15}$ \end{center}}& \tiny{\begin{center}$\theta_{6,4}+\theta_{6,7}+\theta_{6,10}+\theta_{6,13}$ \end{center}}&\tiny{\begin{center} $\theta_{6,6}+\theta_{6,11}$\end{center}}
	       		\end{tabular}
	       	\end{psmallmatrix},\\	
	       \end{gathered}
       \end{equation}

	\begin{center}
	\begin{picture}(292,290) 
		\put(20,30){\line(0,1){252}}
		\put(20,282){\line(1,0){265}}
		\put(285,282){\line(0,-1){252}}
		\put(285,30){\line(-1,0){265}}
		\put(32,30){\line(0,1){204}}
		\put(44,30){\line(0,1){204}}
		\put(56,30){\line(0,1){204}}
		\put(68,30){\line(0,1){252}}
		\put(80,42){\line(0,1){204}}
		\put(92,42){\line(0,1){216}}
		\put(104,42){\line(0,1){204}}
		\put(116,42){\line(0,1){228}}
		\put(128,42){\line(0,1){204}}
		\put(140,42){\line(0,1){216}}
		\put(152,42){\line(0,1){204}}
		\put(164,42){\line(0,1){240}}
		\put(176,42){\line(0,1){204}}
		\put(188,42){\line(0,1){216}}
		\put(200,42){\line(0,1){204}}
		\put(212,42){\line(0,1){228}}
		\put(224,42){\line(0,1){204}}
		\put(236,42){\line(0,1){216}}
		\put(248,42){\line(0,1){204}}
		\put(260,42){\line(0,1){240}}
		\put(20,42){\line(1,0){265}}
		\put(260,54){\line(-1,0){204}}
		\put(260,66){\line(-1,0){216}}
		\put(260,78){\line(-1,0){204}}
		\put(260,90){\line(-1,0){228}}
		\put(260,102){\line(-1,0){204}}
		\put(260,114){\line(-1,0){216}}
		\put(260,126){\line(-1,0){204}}
		\put(260,138){\line(-1,0){240}}
		\put(260,150){\line(-1,0){204}}
		\put(260,162){\line(-1,0){216}}
		\put(260,174){\line(-1,0){204}}
		\put(260,186){\line(-1,0){228}}
		\put(260,198){\line(-1,0){204}}
		\put(260,210){\line(-1,0){216}}
		\put(260,222){\line(-1,0){204}}
		\put(285,234){\line(-1,0){265}}
		\put(285,246){\line(-1,0){217}}
		\put(285,258){\line(-1,0){217}}
		\put(285,270){\line(-1,0){217}}
		\put(58,34){\tiny{$\sigma_0^m$}}
		\put(46,34){\tiny{$\sigma_1^m$}}
		\put(34,34){\tiny{$\sigma_2^m$}}
		\put(22,34){\tiny{$\sigma_3^m$}}
		\put(58,46){\small{$-$}}
		\put(58,58){\small{$+$}}
		\put(58,70){\small{$-$}}
		\put(58,82){\small{$+$}}
		\put(58,94){\small{$-$}}
		\put(58,106){\small{$+$}}
		\put(58,118){\small{$-$}}
		\put(58,130){\small{$+$}}
		\put(58,142){\small{$-$}}
		\put(58,154){\small{$+$}}
		\put(58,166){\small{$-$}}
		\put(58,178){\small{$+$}}
		\put(58,190){\small{$-$}}
		\put(58,202){\small{$+$}}
		\put(58,214){\small{$-$}}
		\put(58,226){\small{$+$}}
		\put(46,52){\small{$-$}}
		\put(46,76){\small{$+$}}
		\put(46,100){\small{$-$}}
		\put(46,124){\small{$+$}}
		\put(46,148){\small{$-$}}
		\put(46,172){\small{$+$}}
		\put(46,196){\small{$-$}}
		\put(46,220){\small{$+$}}
		\put(34,64){\small{$-$}}
		\put(34,112){\small{$+$}}
		\put(34,160){\small{$-$}}
		\put(34,208){\small{$+$}}
		\put(22,88){\small{$-$}}
		\put(22,184){\small{$+$}}
		\put(262,238){\tiny{$\sigma_0^{m+1}$}}
		\put(262,250){\tiny{$\sigma_1^{m+1}$}}
		\put(262,262){\tiny{$\sigma_2^{m+1}$}}
		\put(262,274){\tiny{$\sigma_3^{m+1}$}}
		\put(250,238){\small{$-$}}
		\put(238,238){\small{$+$}}
		\put(226,238){\small{$-$}}
		\put(214,238){\small{$+$}}
		\put(202,238){\small{$-$}}
		\put(190,238){\small{$+$}}
		\put(178,238){\small{$-$}}
		\put(166,238){\small{$+$}}
		\put(154,238){\small{$-$}}
		\put(142,238){\small{$+$}}
		\put(130,238){\small{$-$}}
		\put(118,238){\small{$+$}}
		\put(106,238){\small{$-$}}
		\put(94,238){\small{$+$}}
		\put(82,238){\small{$-$}}
		\put(70,238){\small{$+$}}
		\put(244,250){\small{$-$}}
		\put(220,250){\small{$+$}}
		\put(196,250){\small{$-$}}
		\put(172,250){\small{$+$}}
		\put(148,250){\small{$-$}}
		\put(124,250){\small{$+$}}
		\put(100,250){\small{$-$}}
		\put(76,250){\small{$+$}}
		\put(232,262){\small{$-$}}
		\put(184,262){\small{$+$}}
		\put(136,262){\small{$-$}}
		\put(88,262){\small{$+$}}
		\put(208,274){\small{$-$}}
		\put(112,274){\small{$+$}}
		\label{fig4}
	\end{picture}
\\ FIGURE 2. The structure of the transfer matrix for computing the partition function in a strip with a cross section $2\times2$
\end{center}

	We can now formulate the following main theorem.
	
\begin{theorem}\label{T1} 
	\textbf{Main theorem}
	
	Let  $\lambda_{\max} =\lambda_{\max}(T)$ be the largest eigenvalue of the characteristic equation of the fourth degree of the matrix $\tau$  (\ref{tau}) for the model with the Hamiltonian (\ref{hamiltonian}-\ref{H_m}).

	Then in the thermodynamic limit:

the free energy (\ref{free_energy}) of considering model has the form
	\begin{equation} \label{free_energy_lambda}
		f(T)=-k_BT\ln{(\lambda_{\max}(T))}/4,
	\end{equation}
the internal energy (\ref{int_energy}) per one lattice site can be represented as
\begin{equation}\label{int_energy_lambda}
	u(T)=k_BT^2\frac{\partial}{\partial T}[\ln{(\lambda_{\max}(T))}/4],
\end{equation}
the heat capacity (\ref{heat_capacity}) per one lattice site is equal to
\begin{equation}\label{heat_capacity_lambda}
	C(T)=\frac{\partial}{\partial T}u(T)=2k_BT\frac{\partial}{\partial T}[\ln{(\lambda_{\max}(T))/4}]+k_BT^2\frac{\partial^2}{\partial T^2}[\ln{(\lambda_{\max}(T))}/4],
\end{equation}
	where $k_B$ is Boltzmann’s constant.

\end{theorem}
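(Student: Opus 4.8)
The plan is to reduce the free energy to the dominant eigenvalue of the transfer matrix and then to justify that this dominant eigenvalue is exactly the largest root of the characteristic polynomial of the $4\times 4$ matrix $\tau$. First I would record that by (\ref{new_part_func2}) the partition function is $Z_{2,2,M}=\operatorname{tr}(\theta^{M})=\sum_{i=1}^{16}\lambda_i^{M}$, where $\lambda_1,\dots,\lambda_{16}$ are the eigenvalues of $\theta$. Since every matrix element (\ref{teta_k,l}) is an exponential, $\theta$ has strictly positive entries, so the Perron--Frobenius theorem applies: the eigenvalue of largest modulus $\lambda_{\max}>0$ is simple, strictly dominant (so $|\lambda_i|<\lambda_{\max}$ for $i\ne\max$), and admits a strictly positive eigenvector $v$.

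The crux is the symmetry reduction. Because the Hamiltonian (\ref{hamiltonian})--(\ref{H_m}) is built by summing the generating term over the four rotations $(R^m_{\pi/2})^j$, it is invariant under the cyclic rotation $\sigma\mapsto(\sigma_1,\sigma_2,\sigma_3,\sigma_0)$ of (\ref{R}); and because every support in $\Phi^m$ (\ref{Phi^m}) contains an even number of sites, it is also invariant under the global flip $\sigma\mapsto-\sigma$. Let $G\cong\mathbb{Z}_4\times\mathbb{Z}_2$ be the group generated by this rotation and this flip, acting on the $16$ spin states by permuting coordinates; the associated permutation matrices commute with $\theta$. I would then check that $G$ partitions the $16$ configurations into exactly four orbits — $\{(+,+,+,+),(-,-,-,-)\}$, the eight states with one or three minus signs, the four states with two adjacent minus signs, and the two states with two opposite minus signs — which are precisely the four column groupings appearing in (\ref{tau}). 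On the subspace $V^{G}$ of vectors constant on $G$-orbits (of dimension $4$), $\theta$ restricts to an operator whose matrix in the orbit basis is exactly $\tau$, because commutation with $G$ guarantees that $\sum_{l\in O}\theta_{k,l}$ depends only on the orbit of $k$, so that choosing the representatives $1,2,4,6$ for the rows reproduces the entries $\tau_{ij}=\sum_{l\in O_j}\theta_{r_i,l}$. Consequently the four eigenvalues of $\tau$ form a subset of the spectrum of $\theta$. Finally, since $\theta(Pv)=P(\theta v)=\lambda_{\max}Pv$ and $Pv>0$ for each $P\in G$, uniqueness of the Perron vector forces $Pv=v$, so $v\in V^{G}$; hence $\lambda_{\max}$ is an eigenvalue of $\tau$, and being the largest eigenvalue of $\theta$ it is the largest root of the characteristic equation of $\tau$.

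With the reduction in place the thermodynamics is routine. Strict dominance gives $Z_{2,2,M}=\lambda_{\max}^{M}\bigl(1+\sum_{i\ne\max}(\lambda_i/\lambda_{\max})^{M}\bigr)=\lambda_{\max}^{M}(1+o(1))$, so $\ln Z_{2,2,M}=M\ln\lambda_{\max}+o(1)$; dividing by $N=4M$ and letting $M\to\infty$ in (\ref{free_energy}) yields (\ref{free_energy_lambda}). Substituting $f(T)/T=-k_B\ln\lambda_{\max}(T)/4$ into the definition (\ref{int_energy}) of the internal energy gives (\ref{int_energy_lambda}), and differentiating once more as in (\ref{heat_capacity}) gives (\ref{heat_capacity_lambda}).

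I expect the main obstacle to be the symmetry-reduction step: one must verify that the combined rotation-plus-flip group has exactly the four orbits matching the block structure of (\ref{tau}), that the restricted operator is \emph{literally} $\tau$ with the chosen representatives rather than a conjugate of it, and that the Perron eigenvector genuinely lies in $V^{G}$, so that the \emph{largest} root of $\tau$ — and not merely some root — coincides with $\lambda_{\max}$. The limiting argument and the two differentiations are standard once this is established.
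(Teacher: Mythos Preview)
Your proposal is correct and follows essentially the same route as the paper: both reduce the $16\times16$ transfer matrix to the $4\times4$ block $\tau$ via the $\pi/2$ rotation symmetry together with the global spin--flip symmetry (the paper phrases the latter as ``central symmetry'' of $\theta$, citing \cite{Andrew}, rather than as invariance under $\sigma\mapsto-\sigma$), and both invoke Perron--Frobenius to place the dominant eigenvector in the symmetric subspace. Your group-theoretic packaging as $\mathbb{Z}_4\times\mathbb{Z}_2$ orbits is a bit cleaner than the paper's separate treatment of $D$-commutation and centrosymmetry, but the content is the same; the paper additionally records the explicit Ferrari solution of the quartic, which is not needed for the theorem as stated.
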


\begin{proof}

The proof is based on commutation of the matrix $\theta$ with a matrix of a special form; this allows us to refine the form of the eigenvector corresponding to the largest eigenvalue.

The transfer matrix $\theta$ commutes with the "rotation" matrix $ \;$ on $ \pi/2$ $ D:\{\sigma_0^m\rightarrow\sigma_1^{m+1};\sigma_1^m\rightarrow\sigma_2^{m+1};\sigma_2^m\rightarrow\sigma_3^{m+1};\sigma_3^m\rightarrow\sigma_0^{m+1}, m=0,..,M-1\}$ .

\begin{equation}\label{D}
 D =\begin {psmallmatrix} 
1& 0& 0& 0& 0& 0& 0& 0& 0& 0& 0& 0& 0& 0& 0& 0 \\
0& 0& 1& 0& 0& 0& 0& 0& 0& 0& 0& 0& 0& 0& 0& 0 \\
0& 0& 0& 0& 1& 0& 0& 0& 0& 0& 0& 0& 0& 0& 0& 0 \\
0& 0& 0& 0& 0& 0& 1& 0& 0& 0& 0& 0& 0& 0& 0& 0 \\
0& 0& 0& 0& 0& 0& 0& 0& 1& 0& 0& 0& 0& 0& 0& 0 \\
0& 0& 0& 0& 0& 0& 0& 0& 0& 0& 1& 0& 0& 0& 0& 0 \\
0& 0& 0& 0& 0& 0& 0& 0& 0& 0& 0& 0& 1& 0& 0& 0 \\
0& 0& 0& 0& 0& 0& 0& 0& 0& 0& 0& 0& 0& 0& 1& 0 \\
0& 1& 0& 0& 0& 0& 0& 0& 0& 0& 0& 0& 0& 0& 0& 0 \\
0& 0& 0& 1& 0& 0& 0& 0& 0& 0& 0& 0& 0& 0& 0& 0 \\
0& 0& 0& 0& 0& 1& 0& 0& 0& 0& 0& 0& 0& 0& 0& 0 \\
0& 0& 0& 0& 0& 0& 0& 1& 0& 0& 0& 0& 0& 0& 0& 0 \\
0& 0& 0& 0& 0& 0& 0& 0& 0& 1& 0& 0& 0& 0& 0& 0 \\
0& 0& 0& 0& 0& 0& 0& 0& 0& 0& 0& 1& 0& 0& 0& 0 \\
0& 0& 0& 0& 0& 0& 0& 0& 0& 0& 0& 0& 0& 1& 0& 0 \\
0& 0& 0& 0& 0& 0& 0& 0& 0& 0& 0& 0& 0& 0& 0& 1 \\
\end{psmallmatrix} 	
\end{equation}

The eigenvectors of the matrix $D$ (\ref{D}), corresponding to the positive eigenvalue $1$, have the form:
\begin{equation}
	\begin{gathered}
		\{(0, 0, 0, 0, 0, 0, 0, 0, 0, 0, 0, 0, 0, 0, 0, 1),\\
		(0, 0, 0, 0, 0, 0, 
		0, 1, 0, 0, 0, 1, 0, 1, 1, 0),\\
		(0, 0, 0, 1, 0, 0, 1, 0, 0, 1, 0, 0, 
		1, 0, 0, 0),\\
		(0, 0, 0, 0, 0, 1, 0, 0, 0, 0, 1, 0, 0, 0, 0, 0),\\
		(0, 
		1, 1, 0, 1, 0, 0, 0, 1, 0, 0, 0, 0, 0, 0, 0),\\
		(1, 0, 0, 0, 0, 0, 0, 
		0, 0, 0, 0, 0, 0, 0, 0, 0)\}.
	\end{gathered}
\end{equation}

 The linear combination of these vectors is also an eigenvector that corresponds to the eigenvalue 1. Since the commuting matrices preserve each other's own subspaces \cite{Horn} and taking into account the central symmetry of the transfer matrix $\theta$ \cite{Andrew}, we will look for the eigenvector of the matrix $\theta$ corresponding to the largest eigenvalue in the form:
\begin{displaymath}
	(x_{1} , x_{2} , x_{2}, x_{3} , x_{2} , x_{4} , x_{3} , x_{2} , x_{2}, x_{3}  , x_{4} , x_{2} , x_{3} , x_{2}, 
	x_{2} , x_{1}),
\end{displaymath}
which, by the Perron-Frobenius theorem \cite{Perron}, has positive coordinates.

The form of the eigenvector allows you to search for the largest eigenvalue not of a matrix $\theta$ with size $16\times16$, but of a matrix $\tau $ with size $4\times4$ (\ref{tau}).

The characteristic polynomial of a matrix $\tau$ (\ref{tau}) is:

\begin{equation}\label{poly4}
	\begin{gathered}
		\lambda^4+a\lambda^3+b\lambda^2+c\lambda+d=0,
	\end{gathered}
\end{equation}
where

\begin{equation}\label{a}
	\begin{gathered}
		a=-
		(\tau_{1,1 }+
		\tau_{2,2 }+
		\tau_{3,3 } +
		\tau_{4,4 }),
	\end{gathered}
\end{equation}
\begin{equation}\label{b}
	\begin{gathered}
		b=
		-\tau_{1,2 }\tau_{2,1 }+ 
		\tau_{1,1 }\tau_{2,2 } -
		\tau_{1,3 }\tau_{3,1 } -
		\tau_{2,3 }\tau_{3,2 } +
		\tau_{1,1 }\tau_{3,3 } +\\
		\tau_{2,2 }\tau_{3,3 } -
		\tau_{1,4 }\tau_{4,1 } -
		\tau_{2,4 }\tau_{4,2 } -
		\tau_{3,4 }\tau_{4,3 } +
		\tau_{1,1 }\tau_{4,4 } +
		\tau_{2,2 }\tau_{4,4 } +
		\tau_{3,3 }\tau_{4,4 },
	\end{gathered}
\end{equation}
\begin{equation}\label{c}
	\begin{gathered}
		c=\tau_{1,3 }\tau_{2,2 }\tau_{3,1 } -
		\tau_{1,2 }\tau_{2,3 }\tau_{3,1 } -
		\tau_{1,3 }\tau_{2,1 }\tau_{3,2 } +
		\tau_{1,1 }\tau_{2,3 }\tau_{3,2 } +
		\tau_{1,2 }\tau_{2,1 }\tau_{3,3 } - \\
		\tau_{1,1 }\tau_{2,2 }\tau_{3,3 } +
		\tau_{1,4 }\tau_{2,2 }\tau_{4,1 } -
		\tau_{1,2 }\tau_{2,4 }\tau_{4,1 } +
		\tau_{1,4 }\tau_{3,3 }\tau_{4,1 } - 
		\tau_{1,3 }\tau_{3,4 }\tau_{4,1 } -\\
		\tau_{1,4 }\tau_{2,1 }\tau_{4,2 } +
		\tau_{1,1 }\tau_{2,4 }\tau_{4,2 } +
		\tau_{2,4 }\tau_{3,3 }\tau_{4,2 } -
		\tau_{2,3 }\tau_{3,4 }\tau_{4,2 } -
		\tau_{1,4 }\tau_{3,1 }\tau_{4,3 } -\\
		\tau_{2,4 }\tau_{3,2 }\tau_{4,3 } +
		\tau_{1,1 }\tau_{3,4 }\tau_{4,3 } + 
		\tau_{2,2 }\tau_{3,4 }\tau_{4,3 } +
		\tau_{1,2 }\tau_{2,1 }\tau_{4,4 } -
		\tau_{1,1 }\tau_{2,2 }\tau_{4,4 } +\\
		\tau_{1,3 }\tau_{3,1 }\tau_{4,4 } +
		\tau_{2,3 }\tau_{3,2 }\tau_{4,4 } -
		\tau_{1,1 }\tau_{3,3 }\tau_{4,4 } -
		\tau_{2,2 }\tau_{3,3 }\tau_{4,4 },
	\end{gathered}
\end{equation}
\begin{equation}\label{d}
	\begin{gathered}
		d=
		\tau_{1,4 }\tau_{2,3 }\tau_{3,2 }\tau_{4,1 } -
		\tau_{1,3 }\tau_{2,4 }\tau_{3,2 }\tau_{4,1 } -
		\tau_{1,4 }\tau_{2,2 }\tau_{3,3 }\tau_{4,1} +
		\tau_{1,2 }\tau_{2,4 }\tau_{3,3 }\tau_{4,1} +\\
		\tau_{1,3 }\tau_{2,2 }\tau_{3,4 }\tau_{4,1 } -
		\tau_{1,2 }\tau_{2,3 }\tau_{3,4 }\tau_{4,1 } - 
		\tau_{1,4 }\tau_{2,3 }\tau_{3,1 }\tau_{4,2} +
		\tau_{1,3 }\tau_{2,4 }\tau_{3,1 }\tau_{4,2 }+\\
		\tau_{1,4 }\tau_{2,1 }\tau_{3,3 }\tau_{4,2 } - 
		\tau_{1,1 }\tau_{2,4 }\tau_{3,3 }\tau_{4,2 } -
		\tau_{1,3 }\tau_{2,1 }\tau_{3,4 }\tau_{4,2} +
		\tau_{1,1 }\tau_{2,3 }\tau_{3,4 }\tau_{4,2} + \\
		\tau_{1,4 }\tau_{2,2 }\tau_{3,1 }\tau_{4,3 } -
		\tau_{1,2 }\tau_{2,4 }\tau_{3,1 }\tau_{4,3 } -
		\tau_{1,4 }\tau_{2,1 }\tau_{3,2 }\tau_{4,3} + 
		\tau_{1,1 }\tau_{2,4 }\tau_{3,2 }\tau_{4,3} +\\
		\tau_{1,2 }\tau_{2,1 }\tau_{3,4 }\tau_{4,3 } -
		\tau_{1,1 }\tau_{2,2 }\tau_{3,4 }\tau_{4,3 } - 
		\tau_{1,3 }\tau_{2,2 }\tau_{3,1 }\tau_{4,4} +
		\tau_{1,2 }\tau_{2,3 }\tau_{3,1 }\tau_{4,4} +\\
		\tau_{1,3 }\tau_{2,1 }\tau_{3,2 }\tau_{4,4 } - 
		\tau_{1,1 }\tau_{2,3 }\tau_{3,2 }\tau_{4,4 } -
		\tau_{1,2 }\tau_{2,1 }\tau_{3,3 }\tau_{4,4} +
		\tau_{1,1 }\tau_{2,2 }\tau_{3,3 }\tau_{4,4}.
	\end{gathered}
\end{equation}

	Let us find the largest root of the fourth degree equation from (\ref{poly4}). By the Ferrari method  \cite{Cardano},  \cite{Tabachnikov}, if $y_1$ is a root of the auxiliary equation
	\begin{equation}\label{poly4_sol1}
		\begin{gathered}
			y^3+Ay^2+By+C=0,
		\end{gathered}
	\end{equation}
	where $A=-b$, $B=ac-4d$, $C=-a^2d+4bd-c^2$,
	then the four roots of the original equation are found as the roots of two quadratic equations:
	\begin{equation}\label{poly4_sol2}
		\begin{gathered}
			x^2+(a/2)x+y_1/2=\\ \pm\sqrt{(a^2/4-b+y_1)x^2+((a/2)y_1-c)x+(y_1^2/4-d)},
		\end{gathered}
	\end{equation}
	where the radical expression on the right side is a perfect square.
.
	
	Using the Cardano-Vieta formulas \cite{Cardano}, \cite{Tabachnikov}, one of the roots of the equation (\ref{poly4_sol1}) is written as follows
	
	\begin{equation}\label{poly4_sol3}
		\begin{gathered}
			y_1=\frac{1}{3}(-A+2\sqrt{A^2-3B}\sin[1/3(\arcsin(\frac{2A^3-9AB+27C}{2(A^2-3B)^{3/2}})+2\pi)]),
		\end{gathered}
	\end{equation}

Then the root $\lambda_{\max}$ we need is the largest root of the equation
\begin{equation}\label{poly4_sol4}
	\begin{gathered}
		x^2+px+q=0,
	\end{gathered}
\end{equation}
where $p=a/2-\sqrt{a^2/4-b+y_1}$, $q=y_1/2+\sqrt{y_1^2/4-d}$.

\end{proof}

	\section{Exact solution for invariant percolation in the strip $2\times2\times \infty$}\label{sec_perc}

	Similarly, the partition function of the non-percolation configurations is found. The transfer matrix of non-percolation configurations $\theta'$ coincides with the transfer matrix of all model configurations, but with zero matrix elements corresponding to percolation configurations (Fig.3). 
	
		\begin{center}
		\begin{picture}(292,290) 
		\put(20,30){\line(0,1){252}}
		\put(20,282){\line(1,0){265}}
		\put(285,282){\line(0,-1){252}}
		\put(285,30){\line(-1,0){265}}
		\put(32,30){\line(0,1){204}}
		\put(44,30){\line(0,1){204}}
		\put(56,30){\line(0,1){204}}
		\put(68,30){\line(0,1){252}}
		\put(80,42){\line(0,1){204}}
		\put(92,42){\line(0,1){216}}
		\put(104,42){\line(0,1){204}}
		\put(116,42){\line(0,1){228}}
		\put(128,42){\line(0,1){204}}
		\put(140,42){\line(0,1){216}}
		\put(152,42){\line(0,1){204}}
		\put(164,42){\line(0,1){240}}
		\put(176,42){\line(0,1){204}}
		\put(188,42){\line(0,1){216}}
		\put(200,42){\line(0,1){204}}
		\put(212,42){\line(0,1){228}}
		\put(224,42){\line(0,1){204}}
		\put(236,42){\line(0,1){216}}
		\put(248,42){\line(0,1){204}}
		\put(260,42){\line(0,1){240}}
		\put(20,42){\line(1,0){265}}
		\put(260,54){\line(-1,0){204}}
		\put(260,66){\line(-1,0){216}}
		\put(260,78){\line(-1,0){204}}
		\put(260,90){\line(-1,0){228}}
		\put(260,102){\line(-1,0){204}}
		\put(260,114){\line(-1,0){216}}
		\put(260,126){\line(-1,0){204}}
		\put(260,138){\line(-1,0){240}}
		\put(260,150){\line(-1,0){204}}
		\put(260,162){\line(-1,0){216}}
		\put(260,174){\line(-1,0){204}}
		\put(260,186){\line(-1,0){228}}
		\put(260,198){\line(-1,0){204}}
		\put(260,210){\line(-1,0){216}}
		\put(260,222){\line(-1,0){204}}
		\put(285,234){\line(-1,0){265}}
		\put(285,246){\line(-1,0){217}}
		\put(285,258){\line(-1,0){217}}
		\put(285,270){\line(-1,0){217}}
		\put(58,34){\tiny{$\sigma_0^m$}}
		\put(46,34){\tiny{$\sigma_1^m$}}
		\put(34,34){\tiny{$\sigma_2^m$}}
		\put(22,34){\tiny{$\sigma_3^m$}}
		\put(58,46){\small{$-$}}
		\put(58,58){\small{$+$}}
		\put(58,70){\small{$-$}}
		\put(58,82){\small{$+$}}
		\put(58,94){\small{$-$}}
		\put(58,106){\small{$+$}}
		\put(58,118){\small{$-$}}
		\put(58,130){\small{$+$}}
		\put(58,142){\small{$-$}}
		\put(58,154){\small{$+$}}
		\put(58,166){\small{$-$}}
		\put(58,178){\small{$+$}}
		\put(58,190){\small{$-$}}
		\put(58,202){\small{$+$}}
		\put(58,214){\small{$-$}}
		\put(58,226){\small{$+$}}
		\put(46,52){\small{$-$}}
		\put(46,76){\small{$+$}}
		\put(46,100){\small{$-$}}
		\put(46,124){\small{$+$}}
		\put(46,148){\small{$-$}}
		\put(46,172){\small{$+$}}
		\put(46,196){\small{$-$}}
		\put(46,220){\small{$+$}}
		\put(34,64){\small{$-$}}
		\put(34,112){\small{$+$}}
		\put(34,160){\small{$-$}}
		\put(34,208){\small{$+$}}
		\put(22,88){\small{$-$}}
		\put(22,184){\small{$+$}}
		\put(262,238){\tiny{$\sigma_0^{m+1}$}}
		\put(262,250){\tiny{$\sigma_1^{m+1}$}}
		\put(262,262){\tiny{$\sigma_2^{m+1}$}}
		\put(262,274){\tiny{$\sigma_3^{m+1}$}}
			\put(250,238){\small{$-$}}
			\put(238,238){\small{$+$}}
			\put(226,238){\small{$-$}}
			\put(214,238){\small{$+$}}
			\put(202,238){\small{$-$}}
			\put(190,238){\small{$+$}}
			\put(178,238){\small{$-$}}
			\put(166,238){\small{$+$}}
			\put(154,238){\small{$-$}}
			\put(142,238){\small{$+$}}
			\put(130,238){\small{$-$}}
			\put(118,238){\small{$+$}}
			\put(106,238){\small{$-$}}
			\put(94,238){\small{$+$}}
			\put(82,238){\small{$-$}}
			\put(70,238){\small{$+$}}
			\put(244,250){\small{$-$}}
			\put(220,250){\small{$+$}}
			\put(196,250){\small{$-$}}
			\put(172,250){\small{$+$}}
			\put(148,250){\small{$-$}}
			\put(124,250){\small{$+$}}
			\put(100,250){\small{$-$}}
			\put(76,250){\small{$+$}}
			\put(232,262){\small{$-$}}
			\put(184,262){\small{$+$}}
			\put(136,262){\small{$-$}}
			\put(88,262){\small{$+$}}
			\put(208,274){\small{$-$}}
			\put(112,274){\small{$+$}}
			\put(72,225){\small{$*$}}
			\put(84,225){\small{$*$}}
			\put(96,225){\small{$*$}}
			\put(120,225){\small{$*$}}
			\put(132,225){\small{$*$}}
			\put(168,225){\small{$*$}}
			\put(192,225){\small{$*$}}
			\put(72,213){\small{$*$}}
			\put(84,213){\small{$*$}}
			\put(96,213){\small{$*$}}
			\put(120,213){\small{$*$}}
			\put(132,213){\small{$*$}}
			\put(168,213){\small{$*$}}
			\put(192,213){\small{$*$}}
			\put(72,201){\small{$*$}}
			\put(84,201){\small{$*$}}
			\put(96,201){\small{$*$}}
			\put(120,201){\small{$*$}}
			\put(132,201){\small{$*$}}
			\put(168,201){\small{$*$}}
			\put(192,201){\small{$*$}}
			\put(72,177){\small{$*$}}
			\put(84,177){\small{$*$}}
			\put(96,177){\small{$*$}}
			\put(120,177){\small{$*$}}
			\put(132,177){\small{$*$}}
			\put(168,177){\small{$*$}}
			\put(192,177){\small{$*$}}
			\put(72,165){\small{$*$}}
			\put(84,165){\small{$*$}}
			\put(96,165){\small{$*$}}
			\put(120,165){\small{$*$}}
			\put(132,165){\small{$*$}}
			\put(168,165){\small{$*$}}
			\put(192,165){\small{$*$}}
			\put(72,129){\small{$*$}}
			\put(84,129){\small{$*$}}
			\put(96,129){\small{$*$}}
			\put(120,129){\small{$*$}}
			\put(132,129){\small{$*$}}
			\put(168,129){\small{$*$}}
			\put(192,129){\small{$*$}}
			\put(72,105){\small{$*$}}
			\put(84,105){\small{$*$}}
			\put(96,105){\small{$*$}}
			\put(120,105){\small{$*$}}
			\put(132,105){\small{$*$}}
			\put(168,105){\small{$*$}}
			\put(192,105){\small{$*$}}
			\label{fig5}
		\end{picture}
		\\ FIGURE 3. The structure of the transfer matrix of $\theta'$ of non-percolation configurations. Non-null elements are marked with $*$ .
	\end{center}

    Further, it is shown that the largest eigenvalue $\mu_{\max}$ of the transfer matrix of non-percolation configurations $\theta'$ is the largest root of the characteristic equation of the matrix $\tau'$ with size $3\times3$:

\begin{equation}\label{tau'}
	\begin{gathered}	
		\tau' =\begin {psmallmatrix} 
		\begin{tabular}{p{1cm} p{3cm} p{1.5cm}} \tiny{
				\begin{center}$\theta_{1,1}$\end{center}}& \tiny{\begin{center}$\theta_{1,2}+\theta_{1,3}+\theta_{1,5}+ \theta_{1,9}$\end{center}}&\tiny{\begin{center} $\theta_{1,6}+\theta_{1,11}$\end{center}}\\
			\tiny{\begin{center}$\theta_{2,1}$\end{center}}& \tiny{\begin{center}$\theta_{2,2}+\theta_{2,3}+\theta_{2,5}+ \theta_{2,9}$\end{center}}&\tiny{\begin{center} $\theta_{2,6}+\theta_{2,11}$\end{center}}\\
			\tiny{\begin{center}$\theta_{6,1}$\end{center}}& \tiny{\begin{center}$\theta_{6,2}+\theta_{6,3}+\theta_{6,5}+ \theta_{6,9}$\end{center}}&\tiny{\begin{center} $\theta_{6,6}+\theta_{6,11}$\end{center}}\\
		\end{tabular}
	\end{psmallmatrix},\\	
\end{gathered}
\end{equation}
	 
	 \begin{theorem}\label{T2}
	 \textbf{Main percolation theorem}	
	  
	 The non-percolation probability $ P_M $ (\ref{no_percolation}) of a lattice model with a Hamiltonian (\ref{hamiltonian}) satisfies the following equality
	 \begin{equation}\label{Percolation1}
	 	\lim_{M\rightarrow\infty}{\frac{\ln{P_M}}{M}}=\ln{ \frac{\mu_{\max}}{\lambda_{\max}}},
	 \end{equation}
	 where $\mu_{\max}=\mu_{\max}(T)$ is the largest root is the largest root of the characteristic equation of the matrix $\tau'$ with size $3\times3$ (\ref{tau'}), and $\lambda_{\max}=\lambda_{\max}(T)$ is the largest root of the characteristic equation of the matrix $\tau$ (\ref{tau}).
	 
	 \end{theorem}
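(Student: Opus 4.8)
The plan is to reduce the claim to the standard large-$M$ asymptotics of the trace of a power of a transfer matrix. First I would note that, exactly as in \eqref{new_part_func2}, the restricted partition function is itself a trace,
\[
Z'_{2,2,M} = \operatorname{tr}\big((\theta')^M\big),
\]
because a cyclically closed spin configuration avoids every percolation pattern precisely when the associated closed path in the transfer formalism never visits a percolation state; zeroing the matrix elements attached to those states (Fig.~3) deletes exactly these paths from the trace. Denoting by $\lambda_{\max}$ and $\mu_{\max}$ the spectral radii of $\theta$ and $\theta'$, and using that for a fixed-size non-negative matrix $A$ one has $M^{-1}\ln\operatorname{tr}(A^M)\to\ln\rho(A)$ as soon as $\rho(A)$ is a simple dominant eigenvalue, I would obtain
\[
M^{-1}\ln Z_{2,2,M}\to\ln\lambda_{\max},\qquad M^{-1}\ln Z'_{2,2,M}\to\ln\mu_{\max},
\]
whence $M^{-1}\ln P_M = M^{-1}\big(\ln Z'_{2,2,M}-\ln Z_{2,2,M}\big)\to\ln(\mu_{\max}/\lambda_{\max})$, which is \eqref{Percolation1}.

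The first limit is already supplied by Theorem~\ref{T1}: the entries of $\theta$ are strictly positive, so $\theta$ is primitive and Perron--Frobenius makes $\lambda_{\max}$ a simple, strictly dominant, positive eigenvalue, identified there with the largest root of the characteristic polynomial of $\tau$. For the second limit I would run the same argument for $\theta'$. The matrix $\theta'$ still commutes with the rotation $D$ of \eqref{D}, since the set of percolation patterns is itself invariant under the $\pi/2$ rotation, and it retains the central symmetry used in Theorem~\ref{T1}; hence its dominant eigenvector may be sought in the same symmetry-reduced form. Intersecting the $D$-invariant, centrally symmetric subspace with the coordinate subspace spanned by the non-percolation states reduces the four free parameters of Theorem~\ref{T1} to three, and this is precisely the collapse of $\theta'$ onto the $3\times3$ block $\tau'$ of \eqref{tau'}, whose largest root is $\mu_{\max}$.

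The main obstacle is to verify that Perron--Frobenius still governs $\theta'$, which, unlike $\theta$, has many vanishing entries and is no longer a positive matrix. I would check that $\theta'$ is irreducible and aperiodic: the all-plus state is non-percolating and communicates with every other non-percolating state via positive transitions, giving irreducibility, while the positive diagonal entry $\theta'_{1,1}>0$ (the all-plus self-loop) gives aperiodicity. Primitivity then yields a simple, strictly dominant, positive eigenvalue $\mu_{\max}$ with positive eigenvector, which both legitimizes the trace asymptotics and lets the symmetry reduction pin it down as the largest root of $\tau'$. Along the way I would record the consistency check $\mu_{\max}\le\lambda_{\max}$ (matching $P_M\le1$), immediate from $0\le\theta'\le\theta$ entrywise and monotonicity of the spectral radius on non-negative matrices.
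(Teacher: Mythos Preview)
Your overall strategy---write $Z'_{2,2,M}=\operatorname{tr}((\theta')^M)$, extract the dominant eigenvalue via Perron--Frobenius, and reduce to the $3\times3$ block $\tau'$ using the commutation with $D$---is exactly the paper's approach, and you are in fact more careful than the paper about the trace asymptotics and about why Perron--Frobenius applies (the paper simply writes $Z'/Z\sim\mu_{\max}^M/\lambda_{\max}^M$ without further comment).

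There is, however, one genuine slip in your reduction step. The matrix $\theta'$ does \emph{not} retain the central symmetry used in Theorem~\ref{T1}: state $1$ (all spins $+1$) is non-percolating while its centrosymmetric partner, state $16$ (all spins $-1$), is percolating, so the row/column for state $16$ is zeroed while that for state $1$ is not. If you literally intersect the $4$-parameter centrally-symmetric $D$-invariant subspace $(x_1,x_2,x_2,x_3,x_2,x_4,x_3,x_2,x_2,x_3,x_4,x_2,x_3,x_2,x_2,x_1)$ with the coordinate subspace supported on the seven non-percolating states $\{1,2,3,5,6,9,11\}$, the vanishing at positions $8,12,14,15,16$ forces $x_1=x_2=0$ and at $4,7,10,13$ forces $x_3=0$, leaving only a one-dimensional space---not three.

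The correct reduction, which is what the paper actually does, drops central symmetry and uses only the $D$-commutation together with the zero-row structure of $\theta'$. The eigenvalue-$1$ subspace of $D$ is six-dimensional, parametrised by the values on the orbits $\{1\},\{2,3,5,9\},\{4,7,10,13\},\{6,11\},\{8,12,14,15\},\{16\}$. Since the percolation rows of $\theta'$ vanish, any eigenvector with nonzero eigenvalue has zero components on the percolation orbits $\{4,7,10,13\},\{8,12,14,15\},\{16\}$; the Perron vector (positive on the $7\times7$ non-percolation block, which is entrywise positive and hence primitive) therefore lives in the remaining three-dimensional $D$-invariant space, giving exactly $\tau'$. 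With this correction your argument goes through.
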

	 
	 \begin{proof}
	 
	 The proof is based on the commutation of the matrix $\theta'$ with the rotation matrix $D$ (\ref{D}). 
	 
	 Based on the structure of the new matrix $\theta'$, the eigenvector of the matrix $\theta'$ corresponding to the largest eigenvalue has the form:
	 \begin{displaymath}
	 	(x_{1} , x_{2} , x_{2} ,  0, x_{2} , x_{3}, 0, 0 , x_{2}, 0 ,  x_{3}, 0, 0, 0, 0, 0).
	 \end{displaymath}
	 
	The problem of finding the largest eigenvalue of the $\theta'$ matrix is reduced to finding the largest eigenvalue of the matrix $3\times3$ with size $\tau'$ (\ref{tau'}).

In this case the characteristic polynomial of the matrix $\tau'$ (\ref{tau'}) is equal to:
\begin{equation}\label{char_poly'}
	\begin{gathered}
		\mu^3+a'\mu^2+b'\mu+c'=0
	\end{gathered}
\end{equation}

where

\begin{equation}\label{a'}
	\begin{gathered}
		a'=-
		(\tau'_{1,1 }+
		\tau'_{2,2 }+
		\tau'_{3,3 })
	\end{gathered}
\end{equation}
\begin{equation}\label{b'}
	\begin{gathered}
		b'=
		-\tau'_{1,2 }\tau'_{2,1 }+ 
		\tau'_{1,1 }\tau'_{2,2 } -
		\tau'_{1,3 }\tau'_{3,1 } -
		\tau'_{2,3 }\tau'_{3,2 } +
		\tau'_{1,1 }\tau'_{3,3 } +
		\tau'_{2,2 }\tau'_{3,3 }
	\end{gathered}
\end{equation}
\begin{equation}\label{c'}
	\begin{gathered}
		c'=\tau'_{1,3 }\tau'_{2,2 }\tau'_{3,1 } -
		\tau'_{1,2 }\tau'_{2,3 }\tau'_{3,1 } -
		\tau'_{1,3 }\tau'_{2,1 }\tau'_{3,2 } +
		\tau'_{1,1 }\tau'_{2,3 }\tau'_{3,2 } +
		\tau'_{1,2 }\tau'_{2,1 }\tau'_{3,3 } - \\
		\tau'_{1,1 }\tau'_{2,2 }\tau'_{3,3 } 
	\end{gathered}
\end{equation}

The largest root of the equation(\ref{char_poly'}) is:	
\begin{equation}\label{poly3}
	\begin{gathered}
		\mu_{\max}=\frac{1}{3}(-a'+2\sqrt{(a')^2-3b'}\sin[1/3(\arcsin(\frac{2(a')^3-9a'b'+27c'}{2((a')^2-3b')^{3/2}})+2\pi)]),
	\end{gathered}
\end{equation}

Then the non-percolation probability (\ref{no_percolation}) is rewritten as follows
\begin{displaymath}
	\frac{1}{M}\ln{P_M}=\frac{1}{M}\ln{\frac{Z'}{Z}}=\frac{1}{M}\ln{\frac{\mu_{\max}^M}{\lambda_{\max}^M}}\underset{M\rightarrow\infty}{\longrightarrow} \ln{ \frac{\mu_{\max}}{\lambda_{\max}}}.
\end{displaymath}

\end{proof}

\section{Some simplest particular cases of solution }\label{Simplest cases}

\subsection{The first particular case}
Now let us consider the Hamiltonian, with non-zero generating interactions (other interactions are zero)
\begin{displaymath}
	\begin{gathered}
	J_{\{t_0^m,t_0^{m+1}\}}=J_1, J_{\{t_0^m,t_1^m,t_2^m,t_3^m\}}=J_2, J_{\{t_0^{m+1},t_1^{m+1},t_2^{m+1},t_3^{m+1}\}}=J_2,\\ J_{\{t_0^{m},t_1^{m},t_2^{m},t_3^{m+1}\}}=J_3, J_{\{t_3^{m},t_0^{m+1},t_1^{m+1},t_2^{m+1}\}}=J_4,
	J_{\{t_1^{m},t_2^{m},t_3^{m},t_1^{m+1},t_2^{m+1},t_3^{m+1}\}}=J_5,\\
	J_{\{t_0^{m},t_1^{m},t_2^{m},t_3^{m},t_0^{m+1},t_1^{m+1},t_2^{m+1},t_3^{m+1}\}}=J_6.
    \end{gathered}
\end{displaymath}
	The generating Hamiltonian of the model has the form

	\begin{equation}\label{hamiltonian_part1}
	\begin{gathered} 
		\mathcal{\hat{H}}^m=J_1\sigma_0^m\sigma_0^{m+1}+J_2(\sigma_0^m\sigma_1^{m}\sigma_2^{m}\sigma_3^{m}+\sigma_0^{m+1}\sigma_1^{m+1}\sigma_2^{m+1}\sigma_3^{m+1})+\\
		J_3\sigma_0^m\sigma_1^{m}\sigma_2^{m}\sigma_3^{m+1}+J_4\sigma_0^{m}\sigma_1^{m+1}\sigma_2^{m+1}\sigma_3^{m+1}+J_5\sigma_1^{m}\sigma_2^{m}\sigma_3^{m}\sigma_1^{m+1}\sigma_2^{m+1}\sigma_3^{m+1}+\\J_6\sigma_0^m\sigma_1^{m}\sigma_2^{m}\sigma_3^{m}\sigma_0^{m+1}\sigma_1^{m+1}\sigma_2^{m+1}\sigma_3^{m+1}.
	\end{gathered} 	
\end{equation}

	Further, for convenience, we introduce the notation $p, q, u, v$ with lower indexing, where $p$ corresponds to the interaction coefficient of two spins, $q$ - four spins, $u$ - six spins, $v$ - eight spins.

 Let $p_1=\exp(J_{\{t_0^m,t_0^{m+1}\}}), q_1=\exp(8J_{\{t_0^m,t_1^{m},t_2^{m},t_3^{m}\}}), q_2=\exp(J_{\{t_0^m,t_1^{m},t_2^{m},t_3^{m+1}\}}),\\ q_{3}=\exp(J_{\{t_0^m,t_1^{m+1},t_2^{m+1},t_3^{m+1}\}}), u_1=\exp(J_{\{t_0^m,t_1^{m},t_2^{m},t_0^{m+1},t_1^{m+1},t_2^{m+1}\}}),\\ v_{1}=\exp(4J_{\{t_0^m,t_1^{m},t_2^{m},t_3^{m},t_0^{m+1},t_1^{m+1},t_2^{m+1},t_3^{m+1}\}})$ . Then the following theorem holds.
\begin{theorem}\label{T3} 
	In the thermodynamic limit, the free energy of the model with the generating Hamiltonian (\ref{hamiltonian_part1}) and Hamiltonian (\ref{hamiltonian},\ref{H_m}) can be represented as (\ref{free_energy_lambda}), internal energy as (\ref{int_energy_lambda}), heat capacity as (\ref{heat_capacity_lambda}), while
	
	\begin{equation}\label{lambda_max_part3}
		\begin{gathered}
			\lambda_{\max}=(-a+\sqrt{a^2-4b})/2,
		\end{gathered}
	\end{equation}
  where 
  \begin{equation}\label{a_part3}
  	\begin{gathered}
  		a=-6 v_1q_1^{-1} - 6 q_1 v_1 - q_1 v_1p_1^{-4} q_2^{-4} q_{3}^{-4} u_1^{-4} - 
  		q_2^4 q_{3}^4 v_1 p_2^{-4} q_1^{-1} u_2^{-4} -\\ p_2^4 u_2^4 v_1 q_1^{-1} q_2^{-4} q_{3}^{-4} - 
  		p_1^4 q_1 q_{2}^4 q_{3}^4 u_{1}^4 v_{1},
  	\end{gathered}
  \end{equation}
  \begin{equation}\label{b_part3}
  	\begin{gathered}
  		b=-16 q_{3}^4 q_2^{-4} v_1^{-2} - 16 q_2^4 q_{3}^{-4} v_1^{-2} - 16 p_1^4
  		u_1^{-4} v_1^{-2} - 16 u_1^4 p_1^{-4} v_1^{-2} + 36 v_1^2 +\\
  		v_1^2 q_2^{-8} q_{3}^{-8} + 
  		q_{3}^8 q_{5}^8 v_{1}^2 + 
  		v_1^2 p_1^{-8} u_1^{-8} + 
  		6 v_1^2 p_{2}^{-4} q_{5}^{-4} q_{3}^{-4} u_1^{-4} + 
  		6 q_{5}^4 q_{3}^4 v_{1}^2 p_{2}^{-4} u_1^{-4} +\\ 
  		6 p_1^4 u_1^4 v_1^2  q_2^{-4} q_{3}^{-4} + 6 p_1^4 q_2^4 q_{3}^4 u_1^4 v_1^2 + 
  		p_1^8 u_1^8 v_1^2.
  	\end{gathered}
  \end{equation}

\end{theorem}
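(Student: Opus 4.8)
The plan is to piggy-back on Theorem \ref{T1}, which already expresses every thermodynamic quantity through the largest eigenvalue $\lambda_{\max}$ of the $4\times 4$ matrix $\tau$ from (\ref{tau}). Since the generating Hamiltonian (\ref{hamiltonian_part1}) is merely a specialization of (\ref{hamiltonian})--(\ref{H_m}) in which only $J_1,\dots,J_6$ are nonzero, the formulas (\ref{free_energy_lambda}), (\ref{int_energy_lambda}) and (\ref{heat_capacity_lambda}) for $f(T)$, $u(T)$ and $C(T)$ carry over verbatim. Hence the whole content of the statement is the algebraic claim that, for this restricted coupling pattern, $\lambda_{\max}$ is the larger root of the quadratic $\lambda^{2}+a\lambda+b=0$ with $a,b$ as in (\ref{a_part3})--(\ref{b_part3}); equivalently, that the quartic characteristic polynomial (\ref{poly4}) of $\tau$ degenerates so that its dominant root is governed by a $2\times 2$ block.

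First I would compute the matrix elements $\theta_{k,l}$ for the Hamiltonian (\ref{hamiltonian_part1}). Each admissible configuration of the cube $\Omega^m_C$ contributes a Boltzmann weight that is a monomial in $p_1,q_1,q_2,q_3,u_1,v_1$ with exponents $\pm 1$ or $\pm 4$, the powers $4$ and $8$ inside the definitions of $q_1$ and $v_1$ absorbing the fourfold $\pi/2$-symmetrization present in (\ref{H_m}). Substituting these weights into the prescription (\ref{tau}) gives $\tau$ explicitly as a matrix in the six variables, with strictly positive entries, so that Perron--Frobenius \cite{Perron} already guarantees a unique positive dominant eigenvalue.

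The decisive structural step is to exhibit an auxiliary matrix $S$ commuting with $\theta$ (and hence with $\tau$) that exists precisely because the sparse coupling pattern of (\ref{hamiltonian_part1}) possesses an extra involutive symmetry of the cube beyond the rotation $D$ (\ref{D}) and the global spin flip used in Theorem \ref{T1}. Exactly as there, commuting matrices preserve each other's eigenspaces \cite{Horn}, so the Perron--Frobenius eigenvector of $\theta$ may be sought inside a common eigenspace of $S$; this forces two of the four parameters $x_1,x_2,x_3,x_4$ in the eigenvector of Theorem \ref{T1} to become linearly dependent, collapsing $\tau$ onto a $2\times 2$ matrix $\tilde\tau$. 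Its characteristic polynomial is then $\lambda^{2}+a\lambda+b$ with $-a=\mathrm{tr}\,\tilde\tau$ and $b=\det\tilde\tau$, and evaluating these two scalars in terms of $p_1,q_1,q_2,q_3,u_1,v_1$ is designed to reproduce (\ref{a_part3}) and (\ref{b_part3}). The larger root $(-a+\sqrt{a^2-4b})/2$ is positive and, by the Perron--Frobenius argument applied to $\theta$, equals $\lambda_{\max}$, giving (\ref{lambda_max_part3}).

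I expect the main obstacle to be the identification and verification of the commuting matrix $S$: one must determine which involution of the cube leaves (\ref{hamiltonian_part1}) invariant and confirm $S\theta=\theta S$ at the level of the $16\times 16$ matrix, or, equivalently, check that the quartic from (\ref{poly4}) factors with its dominant root lying in the quadratic factor $\lambda^2+a\lambda+b$. Once this reduction is in place, the evaluation of $\mathrm{tr}\,\tilde\tau$ and $\det\tilde\tau$ and the final match with (\ref{a_part3})--(\ref{b_part3}) is routine though bookkeeping-intensive, and serves as the concluding consistency check.
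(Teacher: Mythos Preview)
Your proposal is correct and follows essentially the same route as the paper: invoke Theorem~\ref{T1}, then exploit an additional commuting matrix (the paper calls it $\Phi$ and writes it out explicitly as a $16\times 16$ permutation matrix in (\ref{Phi})) to force the Perron eigenvector into a two-parameter form, reducing the eigenvalue problem to a $2\times 2$ matrix whose trace and determinant give $a$ and $b$. The only thing you leave open that the paper supplies is the concrete identification of $S=\Phi$; once you have it, the rest is exactly as you describe.
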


\begin{proof}

The choice of these nonzero interactions is justified by the fact that the transfer matrix $\theta$ corresponding to such a Hamiltonian (\ref{hamiltonian_part1}) commutes with the matrix $\Phi$:

\begin{equation}\label{Phi}
	  \Phi=
	\begin {psmallmatrix} 
	0& 0& 0& 1& 0& 0& 0& 0& 0& 0& 0& 0& 0& 0& 0& 0 \\
	0& 0& 1& 0& 0& 0& 0& 0& 0& 0& 0& 0& 0& 0& 0& 0 \\
	0& 1& 0& 0& 0& 0& 0& 0& 0& 0& 0& 0& 0& 0& 0& 0 \\
	1& 0& 0& 0& 0& 0& 0& 0& 0& 0& 0& 0& 0& 0& 0& 0 \\
	0& 0& 0& 0& 0& 0& 0& 1& 0& 0& 0& 0& 0& 0& 0& 0 \\
	0& 0& 0& 0& 0& 0& 1& 0& 0& 0& 0& 0& 0& 0& 0& 0 \\
	0& 0& 0& 0& 0& 1& 0& 0& 0& 0& 0& 0& 0& 0& 0& 0 \\
	0& 0& 0& 0& 1& 0& 0& 0& 0& 0& 0& 0& 0& 0& 0& 0 \\
	0& 0& 0& 0& 0& 0& 0& 0& 0& 0& 0& 1& 0& 0& 0& 0 \\
	0& 0& 0& 0& 0& 0& 0& 0& 0& 0& 1& 0& 0& 0& 0& 0 \\
	0& 0& 0& 0& 0& 0& 0& 0& 0& 1& 0& 0& 0& 0& 0& 0 \\
	0& 0& 0& 0& 0& 0& 0& 0& 1& 0& 0& 0& 0& 0& 0& 0 \\
	0& 0& 0& 0& 0& 0& 0& 0& 0& 0& 0& 0& 0& 0& 0& 1 \\
	0& 0& 0& 0& 0& 0& 0& 0& 0& 0& 0& 0& 0& 0& 1& 0 \\
	0& 0& 0& 0& 0& 0& 0& 0& 0& 0& 0& 0& 0& 1& 0& 0 \\
	0& 0& 0& 0& 0& 0& 0& 0& 0& 0& 0& 0& 1& 0& 0& 0 \\
\end{psmallmatrix} 
\end{equation}

	Let us write out the eigenvectors of the matrix $\Phi$ (\ref{Phi}) corresponding to the eigenvalue $1$:
\begin{equation}
	\begin{gathered}
		(0, 0, 0, 0, 0, 0, 0, 0, 0, 0, 0, 0, 1, 0, 0, 1),\\
		(0, 0, 0, 0, 0, 0, 0, 0, 0, 0, 0, 0, 0, 1, 1, 0),\\
		(0, 0, 0, 0, 0, 0, 0, 0, 1, 0, 0, 1, 0, 0, 0, 0),\\
		(0, 0, 0, 0, 0, 0, 0, 0, 0, 1, 1, 0, 0, 0, 0, 0),\\
		(0, 0, 0, 0, 1, 0, 0, 1, 0, 0, 0, 0, 0, 0, 0, 0),\\
		(0, 0, 0, 0, 0, 1, 1, 0, 0, 0, 0, 0, 0, 0, 0, 0)
		,\\
		(1, 0, 0, 1, 0, 0, 0, 0, 0, 0, 0, 0, 0, 0, 0, 0)
		,\\
		(0, 1, 1, 0, 0, 0, 0, 0, 0, 0, 0, 0, 0, 0, 0, 0).
	\end{gathered}
\end{equation}

Then the eigenvector of the matrix $\tau$ corresponding to the largest eigenvalue has the form:
\begin{displaymath}
	(x_{1} , x_{2} , x_{2}, x_{1} , x_{3} , x_{4} , x_{4} , x_{3} , x_{3}, x_{4}  , x_{4} , x_{3} , x_{1} , x_{2}, 
	x_{2} , x_{1}).
\end{displaymath}

Taking into account the commutation of the transfer matrix (\ref{teta_k,l}) for the considered model with the matrix $D$ (\ref{D}), we find the eigenvector in the form:
\begin{displaymath}
	(x_{1} , x_{2} , x_{2}, x_{1} , x_{2} , x_{1} , x_{1} , x_{2} , x_{2}, x_{1}  , x_{1} , x_{2} , x_{1} , x_{2}, x_{2} , x_{1}).
\end{displaymath}

This kind of eigenvector allows us to reduce the search for the largest eigenvalue of the matrix $\theta$ to the matrix $\tau $ with size $2\times2$ with elements:

		 \begin{equation}\label{tau_part1}
			\begin{gathered}		
				\tau =\begin {psmallmatrix} 
				\begin{tabular}{p{3cm} p{3cm}} \tiny{
						\begin{center}$\theta_{1,1}+\theta_{1,4}+\theta_{1,6}+\theta_{1,7}+\theta_{1,10}+\theta_{1,11}+\theta_{1,13}+\theta_{1,16}$ \end{center}}& \tiny{\begin{center}$\theta_{1,2}+\theta_{1,3}+\theta_{1,5}+\theta_{1,8}+\theta_{1,9}+\theta_{1,12}+\theta_{1,14}+\theta_{1,15}$ \end{center}}\\
					\tiny{
						\begin{center}$\theta_{2,1}+\theta_{2,4}+\theta_{2,6}+\theta_{2,7}+\theta_{2,10}+\theta_{2,11}+\theta_{2,13}+\theta_{2,16}$ \end{center}}& \tiny{\begin{center}$\theta_{2,2}+\theta_{2,3}+\theta_{2,5}+\theta_{2,8}+\theta_{2,9}+\theta_{2,12}+\theta_{2,14}+\theta_{2,15}$ \end{center}}\\
				\end{tabular}
			\end{psmallmatrix}.\\	
		\end{gathered}
	\end{equation}

 The characteristic polynomial of matrix $\tau$ (\ref{tau_part1}) is:

\begin{equation}\label{char_poly_part3}
	\begin{gathered}
		\lambda^2+a\lambda+b=0.
	\end{gathered}
\end{equation}

Due to the simplicity of the model, the coefficients of the characteristic polynomial are written explicitly (\ref{a_part3}, \ref{b_part3}), the largest root of the equation is determined by the formula (\ref{lambda_max_part3}).

The polynomial containing the largest eigenvalue of the transfer matrix of non-percolation configurations will be of the third degree and will have the form (\ref{char_poly'}), polynomial coefficients are determined by the formulas (\ref{a'}-\ref{c'}), and the largest eigenvalue of the transfer matrix of non-percolation configurations (\ref{poly3}).

\end{proof}

\subsection{The second particular case}

Let us consider one more special case, when only interactions remain nonzero

 \begin{displaymath}
 	\begin{gathered}
 		J_{\{t_0^m,t_0^{m+1}\}}=J_1, J_{\{t_0^m,t_1^m,t_0^{m+1},t_1^{m+1}\}}=J_2,
 		J_{\{t_0^m,t_2^m,t_0^{m+1},t_2^{m+1}\}}=J_3, \\
 		J_{\{t_1^{m},t_2^{m},t_3^{m},t_1^{m+1},t_2^{m+1},t_3^{m+1}\}}=J_4,
 		J_{\{t_0^{m},t_1^{m},t_2^{m},t_3^{m},t_0^{m+1},t_1^{m+1},t_2^{m+1},t_3^{m+1}\}}=J_5.
 	\end{gathered}
 \end{displaymath}

The generating Hamiltonian of the model has the form:

	\begin{equation}\label{hamiltonian_part2}
	\begin{gathered} 
		\mathcal{\hat{H}}^m=J_1\sigma_0^m\sigma_0^{m+1}+J_2\sigma_0^m\sigma_1^{m}\sigma_0^{m+1}\sigma_1^{m+1}+J_3\sigma_0^m\sigma_2^{m}\sigma_0^{m+1}\sigma_2^{m+1}+\\
		J_4\sigma_1^{m}\sigma_2^{m}\sigma_3^{m}\sigma_1^{m+1}\sigma_2^{m+1}\sigma_3^{m+1}+\\J_5\sigma_0^m\sigma_1^{m}\sigma_2^{m}\sigma_3^{m}\sigma_0^{m+1}\sigma_1^{m+1}\sigma_2^{m+1}\sigma_3^{m+1},	
	\end{gathered} 	
\end{equation}

 Let  $p_1=\exp(J_{\{t_0^m,t_0^{m+1}\}}), 
  q_1=\exp(J_{\{t_0^m,t_1^{m},t_0^{m+1},t_1^{m+1}\}}), q_{2}=
  \exp(J_{\{t_0^m,t_2^{m},t_0^{m+1},t_2^{m+1}\}}),\\ u_1=\exp(J_{\{t_0^m,t_1^{m},t_2^{m},t_0^{m+1},t_1^{m+1},t_2^{m+1}\}}),\\ v_{1}=\exp(4J_{\{t_0^m,t_1^{m},t_2^{m},t_3^{m},t_0^{m+1},t_1^{m+1},t_2^{m+1},t_3^{m+1}\}})$ .

\begin{theorem}\label{T4} 
	In the thermodynamic limit, the free energy of the model with the generating Hamiltonian (\ref{hamiltonian_part2}) and the Hamiltonian (\ref{hamiltonian}-\ref{H_m}) can be represented as (\ref{free_energy_lambda}), the internal energy as (\ref{int_energy_lambda}), heat capacity in the form (\ref{heat_capacity_lambda}), while
	\begin{equation}\label{lambda_part2}
		\begin{gathered}
			\lambda_{\max}=4 p_1^2 u_1^{-2} v_1^{-1} + 4 u_1^2 p_1^{-2} v_1^{-1} + 4 v_1 q_{2}^{-2} + 2 q_{2}^2 v_1 q_1^{-4} + q_1^4 q_{2}^2 v_1 p_1^{-4} u_1^{-4} + \\
			p_1^4 q_1^4 q_{2}^2 u_1^4 v_1.
	\end{gathered}
\end{equation}

\end{theorem}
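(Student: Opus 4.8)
The plan is to follow the same strategy as in the proofs of Theorem~\ref{T1} and Theorem~\ref{T3}: exploit additional symmetries of the transfer matrix $\theta$ built from the Hamiltonian (\ref{hamiltonian_part2}) to force the Perron--Frobenius eigenvector into a highly symmetric form, thereby collapsing the $16\times16$ eigenvalue problem to a scalar one. As before, $\theta$ commutes with the rotation matrix $D$ (\ref{D}) and is centrally symmetric \cite{Andrew}; the whole point of this particular choice of surviving couplings $J_1,\dots,J_5$ is that it endows $\theta$ with one further commuting symmetry, strong enough to reduce the problem all the way to a linear equation rather than to the quadratic of Theorem~\ref{T3}.

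First I would identify the auxiliary permutation(-type) matrix — call it $\Psi$, analogous to $\Phi$ (\ref{Phi}) in the first case — that commutes with the transfer matrix of (\ref{hamiltonian_part2}). The existence of such a $\Psi$ is dictated by the invariance of the generating Hamiltonian under a suitable spin relabelling: every surviving term ($\sigma_0^m\sigma_0^{m+1}$, the two plaquettes, the six-spin and the full eight-spin product) must be left unchanged by that relabelling, so that the corresponding $\Psi$ satisfies $\Psi\theta=\theta\Psi$. I would then write down a basis of the $\Psi$-eigenspace for eigenvalue $1$, exactly as was done for $D$ and for $\Phi$.

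Since commuting matrices preserve one another's eigenspaces \cite{Horn}, the (strictly positive) Perron--Frobenius eigenvector of $\theta$ must lie in the intersection of the $+1$-eigenspaces of $D$ and of $\Psi$ and the centrally symmetric subspace. The key step is to show that this intersection is so restrictive that the eigenvector is determined up to a single scalar — in fact proportional to a constant pattern — so that $\lambda_{\max}$ is read off as a single common row sum of $\theta$, with no quadratic or quartic resolvent required. Summing the sixteen entries of such a row, grouped into the six contributions whose multiplicities $4,4,4,2,1,1$ add up to $16$, and substituting the matrix elements $\theta_{k,l}$ (\ref{teta_k,l}) expressed through $p_1,q_1,q_2,u_1,v_1$, yields the closed form (\ref{lambda_part2}). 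The free energy, internal energy and heat capacity then follow verbatim from Theorem~\ref{T1}, equations (\ref{free_energy_lambda})--(\ref{heat_capacity_lambda}).

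The main obstacle I anticipate is twofold: correctly pinning down the auxiliary matrix $\Psi$ and verifying $\Psi\theta=\theta\Psi$ for this Hamiltonian (equivalently, exhibiting the spin permutation/sign-flip that fixes each of the five interaction terms), and then confirming that the joint symmetry subspace is genuinely one-dimensional. Were it two-dimensional, the reduction would produce a $2\times2$ block and a quadratic equation as in Theorem~\ref{T3}; the assertion that (\ref{lambda_part2}) carries no radical is precisely the claim that the block is $1\times1$, so this dimension count is the crux of the argument. The remainder — evaluating the relevant entries $\theta_{k,l}$ and summing them — is routine bookkeeping.
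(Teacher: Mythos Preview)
Your proposal is correct and follows essentially the same approach as the paper: the paper introduces an explicit auxiliary permutation matrix $\Psi$ commuting with $\theta$, combines its $+1$-eigenspace with central symmetry to obtain an eigenvector of the form $(x_1,x_1,x_2,x_2,x_3,x_3,x_4,x_4,x_4,x_4,x_3,x_3,x_2,x_2,x_1,x_1)$, and then uses commutation with $D$ to collapse it to the constant vector, so that $\lambda_{\max}$ is a single row sum. Your anticipated obstacles (identifying $\Psi$ and verifying the joint eigenspace is one-dimensional) are exactly the two points the paper addresses explicitly.
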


\begin{proof}

 The transfer matrix of such an interaction commutes with the matrix $\Psi$:

\begin{equation}\label{Psi}
	 \Psi=
	\begin {psmallmatrix} 
	0& 1& 0& 0& 0& 0& 0& 0& 0& 0& 0& 0& 0& 0& 0& 0 \\
	1& 0& 0& 0& 0& 0& 0& 0& 0& 0& 0& 0& 0& 0& 0& 0 \\
	0& 0& 0& 1& 0& 0& 0& 0& 0& 0& 0& 0& 0& 0& 0& 0 \\
	0& 0& 1& 0& 0& 0& 0& 0& 0& 0& 0& 0& 0& 0& 0& 0 \\
	0& 0& 0& 0& 0& 1& 0& 0& 0& 0& 0& 0& 0& 0& 0& 0 \\
	0& 0& 0& 0& 1& 0& 0& 0& 0& 0& 0& 0& 0& 0& 0& 0 \\
	0& 0& 0& 0& 0& 0& 0& 1& 0& 0& 0& 0& 0& 0& 0& 0 \\
	0& 0& 0& 0& 0& 0& 1& 0& 0& 0& 0& 0& 0& 0& 0& 0 \\
	0& 0& 0& 0& 0& 0& 0& 0& 0& 1& 0& 0& 0& 0& 0& 0 \\
	0& 0& 0& 0& 0& 0& 0& 0& 1& 0& 0& 0& 0& 0& 0& 0 \\
	0& 0& 0& 0& 0& 0& 0& 0& 0& 0& 0& 1& 0& 0& 0& 0 \\
	0& 0& 0& 0& 0& 0& 0& 0& 0& 0& 1& 0& 0& 0& 0& 0 \\
	0& 0& 0& 0& 0& 0& 0& 0& 0& 0& 0& 0& 0& 1& 0& 0 \\
	0& 0& 0& 0& 0& 0& 0& 0& 0& 0& 0& 0& 1& 0& 0& 0 \\
	0& 0& 0& 0& 0& 0& 0& 0& 0& 0& 0& 0& 0& 0& 0& 1 \\
	0& 0& 0& 0& 0& 0& 0& 0& 0& 0& 0& 0& 0& 0& 1& 0 \\
\end{psmallmatrix} 
\end{equation}

Let us write out the eigenvectors of the matrix $\Psi$ (\ref{Psi}) corresponding to the eigenvalue $1$:
\begin{equation}
	\begin{gathered}
		(1, 1, 0, 0, 0, 0, 0, 0, 0, 0, 0, 0, 0, 0, 0, 0),\\
		(0, 0, 1, 1, 0, 0, 0, 0, 0, 0, 0, 0, 0, 0, 0, 0),\\
		(0, 0, 0, 0, 1, 1, 0, 0, 0, 0, 0, 0, 0, 0, 0, 0),\\
		(0, 0, 0, 0, 0, 0, 1, 1, 0, 0, 0, 0, 0, 0, 0, 0),\\
		(0, 0, 0, 0, 0, 0, 0, 0, 1, 1, 0, 0, 0, 0, 0, 0),\\
		(0, 0, 0, 0, 0, 0, 0, 0, 0, 0, 1, 1, 0, 0, 0, 0)
		,\\
		(0, 0, 0, 0, 0, 0, 0, 0, 0, 0, 0, 0, 1, 1, 0, 0)
		,\\
		(0, 0, 0, 0, 0, 0, 0, 0, 0, 0, 0, 0, 0, 0, 1, 1).
	\end{gathered}
\end{equation}

Then the eigenvector of the matrix $\theta$, corresponding to the largest eigenvalue, can be found in the form:
\begin{displaymath}
	(x_{1} , x_{1} , x_{2}, x_{2} , x_{3} , x_{3} , x_{4} , x_{4} , x_{4}, x_{4}  , x_{3} , x_{3} , x_{2} , x_{2}, 
	x_{1} , x_{1}).
\end{displaymath}

Taking into account the commutation with the matrix $D$ (\ref{d}), the eigenvector corresponding to the largest eigenvalue has a one-component form 
\begin{displaymath}
	(x_{1} , x_{1} , x_{1}, x_{1} , x_{1} , x_{1} , x_{1} , x_{1} , x_{1}, x_{1}  , x_{1} , x_{1} , x_{1} , x_{1}, x_{1} , x_{1}).
\end{displaymath}

Then the largest eigenvalue	$\lambda_{\max}$ has the form (\ref{lambda_part2}).

As in the previous special case, the degree of the polynomial containing the largest eigenvalue for tight configurations is not reduced (\ref{char_poly'}), polynomial coefficients are determined by the formulas (\ref{a'}-\ref{c'}), and the largest eigenvalue of the transfer matrix of non-percolation configurations according to the formula (\ref{poly3}).

\end{proof}

\section{The Gonihedric Ising Model}\label{gonihedric}

\subsection{Exact solution of the gonigendrial model with free boundary conditions}

Let us consider a gonigendrial model \cite{Ambartzumian},\cite{Johnston_1996}, \cite{Pelizzola} or a model with the interaction of nearest neighbors, next nearest neighbors and plaquettes, and
\begin{displaymath}
	\begin{gathered}
2 J_{\{t_0^m,t_1^m\}}= 2 J_{\{t_0^{m+1},t_1^{m+1}\}}= J_{\{t_0^{m},t_0^{m+1}\}}=J_1, \\ 4 J_{\{t_0^m,t_2^m\}}= 4 J_{\{t_0^{m+1},t_2^{m+1}\}}=J_{\{t_0^{m},t_1^{m+1}\}}= J_{\{t_0^{m},t_3^{m+1}\}}=J_2,\\ 8 J_{\{t_0^m,t_1^m,t_2^m,t_3^m\}}= 8 J_{\{t_0^{m+1},t_1^{m+1},t_2^{m+1},t_3^{m+1}\}}= J_{\{t_0^{m},t_1^{m},t_0^{m+1},t_1^{m+1}\}}=J_3,
\end{gathered}
\end{displaymath}
and the remaining coefficients of the Hamiltonian (\ref{hamiltonian}) are zero.

The classical notation of this Hamiltonian \cite{Johnston_1996} for the considering model is
\begin{equation}\label{hamiltonian_part_gonihedric}
	\begin{gathered} 
		\mathcal H(\sigma)= -\sum_{m=0}^{M-1}(
		J_{1}((\sigma_0^m\sigma_1^m+\sigma_1^m\sigma_2^m+\sigma_2^m\sigma_3^m+\sigma_3^m\sigma_0^m+\sigma_0^{m+1}\sigma_1^{m+1}+\\
		\sigma_1^{m+1}\sigma_2^{m+1}+\sigma_2^{m+1}\sigma_3^{m+1}+\sigma_3^{m+1}\sigma_0^{m+1})/2+\sigma_0^m\sigma_0^{m+1}+\sigma_1^m\sigma_1^{m+1}+\\
		\sigma_2^m\sigma_2^{m+1}+\sigma_3^m\sigma_3^{m+1}) + J_{2}((\sigma_0^m\sigma_2^m+\sigma_1^m\sigma_3^m+\sigma_0^{m+1}\sigma_2^{m+1}+\\
		\sigma_1^{m+1}\sigma_3^{m+1})/2+\sigma_0^m\sigma_1^{m+1}+\sigma_1^m\sigma_2^{m+1}+
		\sigma_2^m\sigma_3^{m+1}+\sigma_3^m\sigma_0^{m+1}+\sigma_0^m\sigma_3^{m+1}+\\
		\sigma_1^m\sigma_0^{m+1}+
		\sigma_2^m\sigma_1^{m+1}+\sigma_3^m\sigma_2^{m+1}) +J_{4}((\sigma_0^m\sigma_1^{m}\sigma_2^{m}\sigma_3^{m}+\\
		\sigma_0^{m+1}\sigma_1^{m+1}\sigma_2^{m+1}\sigma_3^{m+1})/2+\sigma_0^m\sigma_1^{m}\sigma_0^{m+1}\sigma_1^{m+1}+\sigma_1^m\sigma_2^{m}\sigma_1^{m+1}\sigma_2^{m+1}+\\
		\sigma_2^m\sigma_3^{m}\sigma_2^{m+1}\sigma_3^{m+1}+\sigma_3^m\sigma_0^{m}\sigma_3^{m+1}\sigma_0^{m+1})).			 
	\end{gathered} 	
\end{equation}

Let us assume that  $p_1=\exp(J_{\{t_0^m,t_0^{m+1}\}}), 
p_2=\exp(J_{\{t_0^m,t_1^{m+1}\}}), \\
q_1=\exp(J_{\{t_0^m,t_1^{m},t_0^{m+1},t_1^{m+1}\}})$ .

Then the matrices $\tau$ (\ref{tau}) and  $\tau'$ (\ref{tau'}) take the form:
\begin{equation}\label{t_gon}
	\begin{gathered}		
	\tau =\begin {psmallmatrix} 
	\begin{tabular}{p{1.5cm} p{3cm} p{2.5cm}  p{1.5cm}} \tiny{
			\begin{center}$q_1^5 p_2^{-6} + p_1^8 p_2^{10} q_1^5$ \end{center}}& \tiny{\begin{center}$4 p_2^{-3} + 4 p_1^4 p_2^5$ \end{center}}& \tiny{\begin{center}$4 p_1^2 q_1$ \end{center}}&\tiny{\begin{center} $2 p_2^2 q_1^{-3}$\end{center}}\\
		\tiny{\begin{center}$p_2^{-3} + p_1^4 p_2^5$ \end{center}}&\tiny{ \begin{center}$2 q_1^{-5} + 2 p_2^{-4} q_1^{-1} + 2 p_2^4 q_1^{-1} + q_1^3 p_1^{-4} + p_1^4 q_1^3$ \end{center}}&\tiny{ \begin{center}$2 p_1^{-2} p_2^{-1} + 2 p_1^2 p_2^{-1}$ \end{center}}&\tiny{\begin{center} $p_2^{-3} + p_2^5 p_1^{-4}$\end{center}}\\
		\tiny{\begin{center}$2 p_1^2 q_1$ \end{center}}&\tiny{ \begin{center}$4 p_1^{-2} p_2^{-1} + 4 p_1^2 p_2^{-1}$ \end{center}}&\tiny{ \begin{center}$2 p_2^{-2} q_1^{-3} + q_1^5 p_1^{-4} p_2^{-2} + p_1^4 q_1^5 p_2^{-2}$ \end{center}}&\tiny{\begin{center} $2 q_1 p_1^{-2}$\end{center}}\\
		\tiny{\begin{center}$2 p_2 q_1^{-2}$ \end{center}}& \tiny{\begin{center}$4 p_2^{-3} + 4 p_2^5 p_1^{-4}$ \end{center}}& \tiny{\begin{center}$4 q_1 p_1^{-2}$ \end{center}}&\tiny{\begin{center} $q_1^5 p_2^{-6} + p_2^{10} q_1^5 p_1^{-8}$\end{center}}
	\end{tabular}
\end{psmallmatrix},\\	
\end{gathered}
\end{equation}

\begin{equation}\label{t'_gon}
	\begin{gathered}		
		\tau' =\begin {psmallmatrix} 
		\begin{tabular}{p{1.5cm} p{3cm} p{2.5cm}} \tiny{
				\begin{center}$p_1^8 p_2^{10} q_1^5$ \end{center}}& \tiny{\begin{center}$4 p_1^4 p_2^5$ \end{center}}& \tiny{\begin{center}$2 p_2^2 q_1^{-3}$ \end{center}}\\
			\tiny{\begin{center}$p_1^4 p_2^5$ \end{center}}&\tiny{ \begin{center}$q_1^{-5} + 2 p_2^4 q_1^{-1} + p_1^4 q_1^3$ \end{center}}&\tiny{ \begin{center}$p_2^{-3} + p_2^5 p_1^{-4}$ \end{center}}\\
			\tiny{\begin{center}$p_2 q_1^{-2}$ \end{center}}&\tiny{ \begin{center}$2 p_2^{-3} + 2 p_2^5 p_1^{-4}$ \end{center}}&\tiny{ \begin{center}$q_1^5 p_2^{-6} + p_2^{10} q_1^5 p_1^{-8}$\end{center}}\\
		\end{tabular}
	\end{psmallmatrix}.\\	
\end{gathered}
\end{equation}

	Now we use Theorem \ref{T1} and Theorem \ref{T2}, and write down as their corollary the results for the gonigendrial model
 \begin{theorem}\label{S1}
 	
 	In the thermodynamic limit, the free energy of the gonigendrial model with the Hamiltonian (\ref{hamiltonian_part_gonihedric}) is represented as (\ref{free_energy_lambda}), the internal energy as (\ref{int_energy_lambda}), the heat capacity as (\ref{heat_capacity_lambda}), where $\lambda_{\max}$ is the largest root of the characteristic equation of the fourth power of the matrix $\tau$ (\ref{t_gon}), and the relation (\ref{Percolation1}) is true for the non-percolation probability, where $\mu_{\max}$ is the largest root of the characteristic equation of the matrix $\tau'$ (\ref{t'_gon}).

\end{theorem}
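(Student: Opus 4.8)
The plan is to treat Theorem \ref{S1} as a direct specialization of Theorem \ref{T1} and Theorem \ref{T2}, so that the only genuinely new work is to verify that the gonigendrial Hamiltonian fits the general framework and to compute the explicit entries of the reduced matrices $\tau$ (\ref{t_gon}) and $\tau'$ (\ref{t'_gon}).

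First I would check that the Hamiltonian (\ref{hamiltonian_part_gonihedric}) is of the general form (\ref{hamiltonian}--\ref{H_m}). Concretely, the classical $J_1$, $J_2$, $J_4$ couplings are assigned to the elementary generating supports in $\Phi_2^m$ and $\Phi_4^m$ through the stated relations $2J_{\{t_0^m,t_1^m\}}=J_{\{t_0^m,t_0^{m+1}\}}=J_1$, and so on; because these coefficients are invariant under the relabelling induced by $R_{\pi/2}^m$ (\ref{R}), the symmetrized sum (\ref{H_m}) reproduces exactly the rotationally symmetric classical Hamiltonian. In particular the transfer matrix $\theta$ of the gonigendrial model still commutes with the rotation matrix $D$ (\ref{D}) and retains the central symmetry used in Theorem \ref{T1}, while the non-percolation transfer matrix $\theta'$ still commutes with $D$ as in Theorem \ref{T2}. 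Hence the eigenvector structure established there---forcing the largest eigenvalue of $\theta$ to arise from the $4\times4$ block $\tau$ and that of $\theta'$ from the $3\times3$ block $\tau'$---carries over verbatim, so no new symmetry argument is required.

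Next I would carry out the explicit computation of the Boltzmann weights $\theta_{k,l}$ for this model. Collapsing the couplings through the relations above leaves only the three independent exponential variables $p_1=\exp(J_{\{t_0^m,t_0^{m+1}\}})$, $p_2=\exp(J_{\{t_0^m,t_1^{m+1}\}})$ and $q_1=\exp(J_{\{t_0^m,t_1^m,t_0^{m+1},t_1^{m+1}\}})$. For each pair of layer configurations I would evaluate the sign products appearing in (\ref{hamiltonian_part_gonihedric}), express the resulting weight as a monomial in $p_1,p_2,q_1$, and then assemble the symmetrized row/column sums prescribed by the definitions (\ref{tau}) and (\ref{tau'}). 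Matching these sums entry by entry against (\ref{t_gon}) and (\ref{t'_gon}) completes the identification of the reduced matrices. I expect this bookkeeping---tracking which of the sixteen spin states contributes each monomial and verifying that the eight-term and four-term sums collapse to the stated powers---to be the main obstacle; it is elementary but lengthy and error-prone.

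Finally, with $\tau$ and $\tau'$ identified, the conclusion is immediate. Theorem \ref{T1} gives the free energy (\ref{free_energy_lambda}), internal energy (\ref{int_energy_lambda}) and heat capacity (\ref{heat_capacity_lambda}) in terms of the largest root $\lambda_{\max}$ of the characteristic equation of $\tau$ (\ref{t_gon}), while Theorem \ref{T2} yields the non-percolation relation (\ref{Percolation1}) with $\mu_{\max}$ the largest root of the characteristic equation of $\tau'$ (\ref{t'_gon}). No further estimate is needed, since positivity of the relevant Perron eigenvectors was already secured in the proofs of those two theorems.
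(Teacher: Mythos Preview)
Your proposal is correct and matches the paper's own approach: the paper explicitly presents Theorem \ref{S1} as a corollary of Theorems \ref{T1} and \ref{T2}, computing the specialized matrices $\tau$ (\ref{t_gon}) and $\tau'$ (\ref{t'_gon}) from the Boltzmann weights of the gonigendrial Hamiltonian and then invoking those general results without further argument. Your plan is, if anything, slightly more explicit than the paper in spelling out why the rotational invariance of (\ref{hamiltonian_part_gonihedric}) places it in the general framework (\ref{hamiltonian}--\ref{H_m}), but the substance is identical.
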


The coefficients of the characteristic polynomials for $\tau$ (\ref{t_gon}) and $\tau'$ (\ref{t'_gon}) are written out in the formulas (\ref{a})-(\ref{d}) и (\ref{a'})-(\ref{c'}). $\lambda_{max}$ is found according to the formulas (\ref{poly4}-\ref{poly4_sol4}), and $\mu_{max}$ is found according to the formula (\ref{poly3}).

In the classical setting \cite{Ambartzumian} the coefficients of the gonigendrial model are related by the relations $J_2=rJ_1$, $J_3=4k/(1-k)$. For example, \cite{Pelizzola} obtained phase diagrams for $k=0$ and $k=1/3$. Let us consider similar cases (example \ref{ex1} and example \ref{ex2}).

\begin{primer} \label{ex1} Let us find the thermodynamic characteristics of the gonigendrial model for $J_{1}=1$, $J_{2}=r$, $J_{3}=0$, $T\in[0.1,5]$, $r\in[- 2,2]$ in the absence of an external magnetic field.  One can see the appearance of phase transition lines beginning at $T>0$, $r=-1$ and $r=-0.25$ on the heat capacity graph (Fig.4).

\begin{center}
\begin{figure}[h]
	\includegraphics[scale=1]{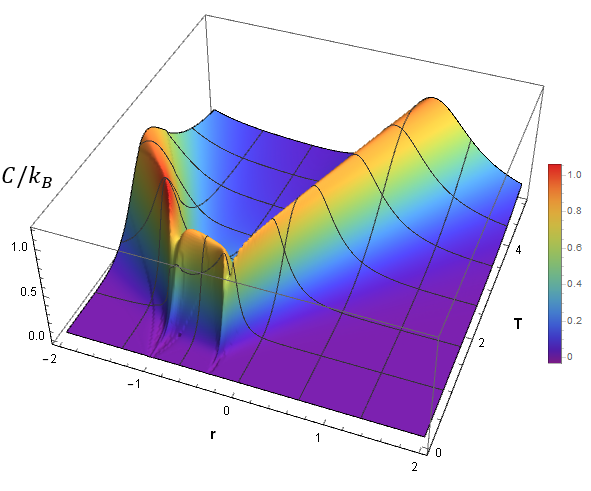}\\
	FIGURE 4. The heat capacity graph for the example \ref{ex1}.
\end{figure}
\end{center}

Large gradients of the impermeability probability (Fig. 5) repeat the lines of incipient phase transitions of the heat capacity.

\begin{center}
\begin{figure}[h]
	\includegraphics[scale=0.5]{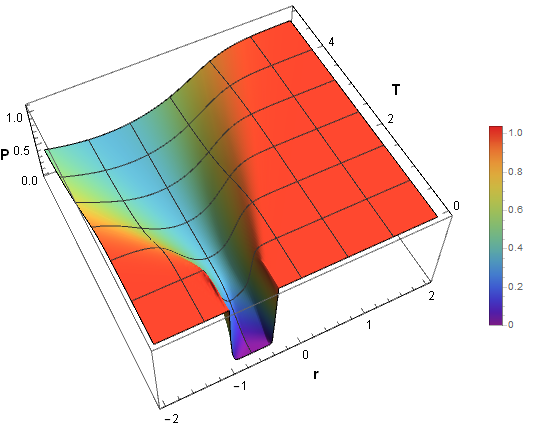}\\
	FIGURE 5. The probability of non-percolation for the example \ref{ex1}.
\end{figure}
\end{center}

\end{primer}

\begin{primer} \label{ex2} Under the conditions of the previous example, let us consider the case of a nonzero plaquette interaction $J_{3}=1/2$, $T\in[0.1,5]$, $r\in[-2,2]$. One can see the appearance of phase transition lines beginning at $T>0$, $r=-1$ and $r=-0.25$ on the heat capacity graph.

At $T\rightarrow 0 $, the heat capacity graph (Fig.6) shows a longer break at the points $r=-1$ and $r=-0.25$ than the one observed in the \ref{ex1} example in Fig. 4.

\begin{center}
\begin{figure}[h]
	\includegraphics[scale=0.5]{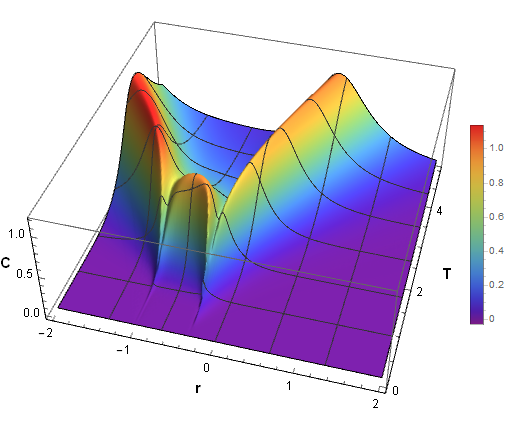}\\
	FIGURE 6. The heat capacity graph for the example \ref{ex2}.
\end{figure}
\end{center}

\end{primer}

\subsection{Numerical solution of the gonigendrial model $3\times3\times\infty$}\label{gon_3x3x3}
 The authors of \cite{Pelizzola} obtained a phase plane, where the phase transition line corresponds to the value $r=-0.25$. Thus, the shape of the phase plane differs from the picture obtained for the strip $2\times2\times\infty$. In this regard, the authors assumed that the model with the $2\times2$ section is strongly subject to boundary conditions, so the gonigendrial model was studied numerically $3\times3\times\infty$.
  
 In the case of free boundary conditions with the Hamiltonian (\ref{hamiltonian_part_gonihedric_3x3xinf_open}) we have:
  
  \begin{multline}\label{hamiltonian_part_gonihedric_3x3xinf_open}
  		\mathcal H(\sigma)= -\sum_{m=0}^{M-1}(
  		J_{1}(\frac{1}{2}\sum_{i=0}^{2}\sum_{j=0}^{1}(\sigma_{i,j}^m\sigma_{i,j+1}^m+\sigma_{j,i}^m\sigma_{j+1,i}^m+\sigma_{i,j}^{m+1}\sigma_{i,j+1}^{m+1}+\\\sigma_{j,i}^{m+1}\sigma_{j+1,i}^{m+1})+
  		\sum_{i=0}^{2}\sum_{j=0}^{2}\sigma_{i,j}^m\sigma_{i,j}^{m+1})
  		+J_2(\frac{1}{2}\sum_{i=0}^{1}\sum_{j=0}^{1}(\sigma_{i,j}^m\sigma_{i+1,j+1}^m+\\\sigma_{i,j+1}^m\sigma_{i+1,j}^m+\sigma_{i,j}^{m+1}\sigma_{i+1,j+1}^{m+1}+\sigma_{i,j+1}^{m+1}\sigma_{i+1,j}^{m+1})+\\
  		\sum_{i=0}^{2}\sum_{j=0}^{1}(\sigma_{i,j}^m\sigma_{i,j+1}^{m+1}+\sigma_{i,j+1}^m\sigma_{i,j}^{m+1}+\sigma_{j,i}^m\sigma_{j+1,i}^{m+1}+\sigma_{j+1,i}^m\sigma_{j,i}^{m+1}))\\
  		+J_3(\frac{1}{2}\sum_{i=0}^{1}\sum_{j=0}^{1}(\sigma_{i,j}^m\sigma_{i,j+1}^m\sigma_{i+1,j}^m\sigma_{i+1,j+1}^m+\sigma_{i,j}^{m+1}\sigma_{i,j+1}^{m+1}\sigma_{i+1,j}^{m+1}\sigma_{i+1,j+1}^m)+\\
  		\sum_{i=0}^{2}\sum_{j=0}^{1}(\sigma_{i,j}^m\sigma_{i,j+1}^m\sigma_{i,j}^{m+1}\sigma_{i,j+1}^{m+1}+\sigma_{j,i}^m\sigma_{j+1,i}^m\sigma_{j,i}^{m+1}\sigma_{j+1,i}^{m+1})).				
  \end{multline}
   Results are obtained (Fig.7 in the \ref{ex3} example), similar to the results in the \ref{ex1} example and the example \ref{ex2}. 
   
  In the cyclically closed case (Fig. 8 in the \ref{ex3} example) with the Hamiltonian (\ref{hamiltonian_part_gonihedric_3x3xinf_close}):
   
   \begin{multline}\label{hamiltonian_part_gonihedric_3x3xinf_close} 
   		\mathcal H(\sigma)= -\sum_{m=0}^{M-1}(
   		J_{1}(\frac{1}{2}\sum_{i=0}^{2}\sum_{j=0}^{2}(\sigma_{i,j}^m\sigma_{i,j+1}^m+\sigma_{j,i}^m\sigma_{j+1,i}^m+\sigma_{i,j}^{m+1}\sigma_{i,j+1}^{m+1}+\\\sigma_{j,i}^{m+1}\sigma_{j+1,i}^{m+1}+\sigma_{i,j}^m\sigma_{i,j}^{m+1}))
   		+J_2(\sum_{i=0}^{2}\sum_{j=0}^{2}((\sigma_{i,j}^m\sigma_{i+1,j+1}^m+\\\sigma_{i,j+1}^m\sigma_{i+1,j}^m+\sigma_{i,j}^{m+1}\sigma_{i+1,j+1}^{m+1}+\sigma_{i,j+1}^{m+1}\sigma_{i+1,j}^{m+1})/2+\sigma_{i,j}^m\sigma_{i,j+1}^{m+1}+\sigma_{i,j+1}^m\sigma_{i,j}^{m+1}+\\
   		\sigma_{j,i}^m\sigma_{j+1,i}^{m+1}+\sigma_{j+1,i}^m\sigma_{j,i}^{m+1}))
   		+J_3(\sum_{i=0}^{2}\sum_{j=0}^{2}((\sigma_{i,j}^m\sigma_{i,j+1}^m\sigma_{i+1,j}^m\sigma_{i+1,j+1}^m+\\\sigma_{i,j}^{m+1}\sigma_{i,j+1}^{m+1}\sigma_{i+1,j}^{m+1}\sigma_{i+1,j+1}^m)/2+\sigma_{i,j}^m\sigma_{i,j+1}^m\sigma_{i,j}^{m+1}\sigma_{i,j+1}^{m+1}+\\\sigma_{j,i}^m\sigma_{j+1,i}^m\sigma_{j,i}^{m+1}\sigma_{j+1,i}^{m+1})),	
   \end{multline}
   where $\sigma^m_{i,3}=\sigma^m_{i,0},\sigma^m_{3,i}=\sigma^m_{0,i},\sigma^m_{3,3}=\sigma^m_{0,0}, i=0,1,2,$
   obtained results close to the results of the work \cite{Pelizzola}. Therefore, further in \ref{gon_cl} we will consider an analog of the cyclically closed gonigendrial model $2\times2\times\infty$.

\begin{primer} \label{ex3} Let us find the heat capacity in the thermodynamic limit of the gonigendrial model $3\times3\times\infty$ with free boundary conditionsи and for cyclically closed gonigendrial model with $J_{1}=1$, $J_{2}=r$, $J_{3}=0$, $T\in[0.1,5]$, $r\in[-2,2]$ in the absence of an external magnetic field.

\begin{center}
\begin{figure}[h]
	\includegraphics[scale=1]{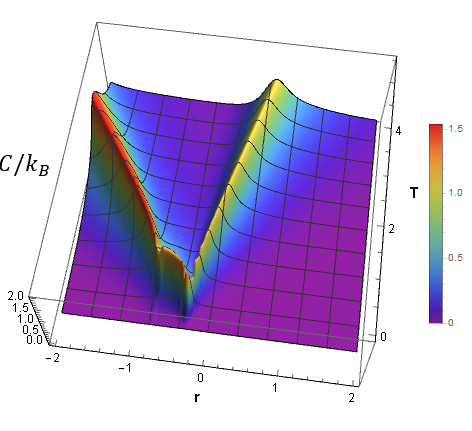}\\
	FIGURE 7. The heat capacity of the $3\times3\times\infty$ gonigendrial model with free boundary conditions for the example \ref{ex3}.
\end{figure}
\end{center}
\begin{center}
\begin{figure}[h]
	\includegraphics[scale=1]{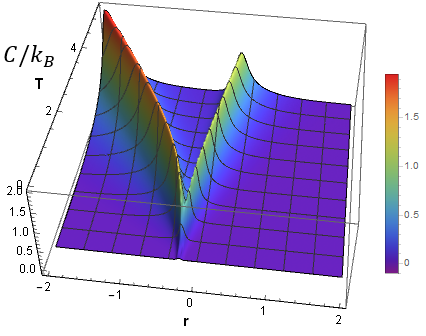}\\
	FIGURE 8. The heat capacity of a cyclically closed gonigendrial model $3\times3\times\infty$ for the example \ref{ex3}.
\end{figure}
\end{center}

For the open model (Fig.7), we observe a shift to the left relative to the value of $r=-0.25$ and a smoothing of the peak in the region of $r=-1$. The closed model, in turn, has a similar appearance to the phase diagram from \cite{Pelizzola}. Therefore, the authors decided to consider an analog of the cyclically closed gonigendrial model for the case $2\times2\times\infty$. 

\end{primer}

\subsection{Exact solution of a cyclically closed gonigendrial model $2\times2\times\infty$}\label{gon_cl}
The general form of the Hamiltonian (\ref{hamiltonian}) allows authors to obtain the exact solution for the analog of the cyclically closed gonigendrial model $2\times2\times\infty$, if we put:

\begin{displaymath}
	\begin{gathered}
		J_{\{t_0^m,t_1^m\}}= J_{\{t_0^{m+1},t_1^{m+1}\}}= J_{\{t_0^{m},t_0^{m+1}\}}=J_1, \\ 8/3 J_{\{t_0^m,t_2^m\}}= 8/3 J_{\{t_0^{m+1},t_2^{m+1}\}}= J_{\{t_0^{m},t_1^{m+1}\}}=  J_{\{t_0^{m},t_3^{m+1}\}}=J_2,\\ 16/3 J_{\{t_0^m,t_1^m,t_2^m,t_3^m\}}= 16/3 J_{\{t_0^{m+1},t_1^{m+1},t_2^{m+1},t_3^{m+1}\}}=  J_{\{t_0^{m},t_1^{m},t_0^{m+1},t_1^{m+1}\}}=J_3,
	\end{gathered}
\end{displaymath}
other coefficients are zero.

The classical notation of the Hamiltonian takes the following form
\begin{equation}\label{hamiltonian_part_gonihedric_cl}
	\begin{gathered} 
		\mathcal H(\sigma)= -\sum_{m=0}^{M-1}(
		J_{1}((\sigma_0^m\sigma_1^m+\sigma_1^m\sigma_2^m+\sigma_2^m\sigma_3^m+\sigma_3^m\sigma_0^m+\sigma_0^{m+1}\sigma_1^{m+1}+\\
		\sigma_1^{m+1}\sigma_2^{m+1}+\sigma_2^{m+1}\sigma_3^{m+1}+\sigma_3^{m+1}\sigma_0^{m+1})+\sigma_0^m\sigma_0^{m+1}+\sigma_1^m\sigma_1^{m+1}+\\
		\sigma_2^m\sigma_2^{m+1}+\sigma_3^m\sigma_3^{m+1}) + J_{2}(3(\sigma_0^m\sigma_2^m+\sigma_1^m\sigma_3^m+\sigma_0^{m+1}\sigma_2^{m+1}+\\
		\sigma_1^{m+1}\sigma_3^{m+1})/4+\sigma_0^m\sigma_1^{m+1}+\sigma_1^m\sigma_2^{m+1}+
		\sigma_2^m\sigma_3^{m+1}+\sigma_3^m\sigma_0^{m+1}+\sigma_0^m\sigma_3^{m+1}+\\
		\sigma_1^m\sigma_0^{m+1}+
		\sigma_2^m\sigma_1^{m+1}+\sigma_3^m\sigma_2^{m+1}) +J_{3}(3(\sigma_0^m\sigma_1^{m}\sigma_2^{m}\sigma_3^{m}+\\
		\sigma_0^{m+1}\sigma_1^{m+1}\sigma_2^{m+1}\sigma_3^{m+1})/8+\sigma_0^m\sigma_1^{m}\sigma_0^{m+1}\sigma_1^{m+1}+\sigma_1^m\sigma_2^{m}\sigma_1^{m+1}\sigma_2^{m+1}+\\
		\sigma_2^m\sigma_3^{m}\sigma_2^{m+1}\sigma_3^{m+1}+\sigma_3^m\sigma_0^{m}\sigma_3^{m+1}\sigma_0^{m+1})).			 
	\end{gathered} 	
\end{equation}

Let us explain this choice of ratios for the coefficients. With cyclic closure, the interaction of the nearest neighbors $J_{\{t_0^m,t_1^m\}}, J_{\{t_0^{m+1},t_1^{m+1}\}}$ doubles, therefore, with taking into account the symmetric notation of the Hamiltonian, the interaction is equated to $J_{\{t_0^{m},t_0^{m+1}\}}$. For the next nearest neighbors, we have an intermediate relation $4/3 J_{\{t_0^m,t_2^m\}}= 4/3 J_{\{t_0^{m+1},t_2^{m+1}\}}=1/2 J_{\{t_0^{m},t_1^{m+1}\}}=1/2  J_{\{t_0^{m},t_3^{m+1}\}}$, where the factor 4 for $J_{\{t_0^m,t_2^m\}},J_{\{t_0^{m+1},t_2^{m+1}\}}$ follows from the symmetric notation and taking into account duplication of interactions in the Hamiltonian (\ref{H_m}), and the divisor is 3 from the increase in the number of interactions during cyclic closure; 1/2 for $J_{\{t_0^{m},t_3^{m+1}\}}$ follows from doubling under cyclic closure. Similarly, one can obtain the relation to plaquettes if we take into account that the Hamiltonian (\ref{H_m}) takes into account the interactions $J_{\{t_0^m,t_1^m,t_2^m,t_3^m\}}, J_{\{t_0 ^{m+1},t_1^{m+1},t_2^{m+1},t_3^{m+1}\}}$ four times.

 Let us introduce the matrices $\tau$ (\ref{tau}) and $\tau'$ (\ref{tau'}) 
\begin{equation}\label{t_gon_cl}
	\begin{gathered}		
		\tau =\begin {psmallmatrix} 
		\begin{tabular}{p{1.5cm} p{3cm} p{2.5cm}  p{1.5cm}} \tiny{
				\begin{center}$p_1^4 q_1^{11/2} p_2^{-5} + p_1^{12} p_2^{11} q_1^{11/2}$ \end{center}}& \tiny{\begin{center}$4 p_1^2 p_2^{-5/2} + 4 p_1^6 p_2^{11/2}$ \end{center}}& \tiny{\begin{center}$4 p_1^4 q_1^{3/2}$ \end{center}}&\tiny{\begin{center} $2 p_2^3 q_1^{-5/2}$\end{center}}\\
			\tiny{\begin{center}$p_1^2 p_2^{-5/2} + p_1^6 p_2^{11/2}$ \end{center}}&\tiny{ \begin{center}$2 q_1^{-11/2} + 2 p_2^{-4} q_1^{-3/2} + 2 p_2^4 q_1^{-3/2} + q_1^{5/2} p_1^{-4} + p_1^4 q_1^{5/2}$ \end{center}}&\tiny{ \begin{center}$2 p_1^{-2} p_2^{-3/2} + 2 p_1^2 p_2^{-3/2}$ \end{center}}&\tiny{\begin{center} $p_1^{-2}p_2^{-5/2} + p_2^{11/2} p_1^{-6}$\end{center}}\\
			\tiny{\begin{center}$2 p_1^4 q_1^{3/2}$ \end{center}}&\tiny{ \begin{center}$4 p_1^{-2} p_2^{-3/2} + 4 p_1^2 p_2^{-3/2}$ \end{center}}&\tiny{ \begin{center}$2 p_2^{-3} q_1^{-5/2} + q_1^{11/2} p_1^{-4} p_2^{-3} + p_1^4 q_1^{11/2} p_2^{-3}$ \end{center}}&\tiny{\begin{center} $2 q_1^{3/2} p_1^{-4}$\end{center}}\\
			\tiny{\begin{center}$2 p_2^3 q_1^{-5/2}$ \end{center}}& \tiny{\begin{center}$4 p_1^{-2} p_2^{-5/2} + 4 p_2^{11/2} p_1^{-6}$ \end{center}}& \tiny{\begin{center}$4 q_1^{3/2} p_1^{-4}$ \end{center}}&\tiny{\begin{center} $q_1^{11/2} p_1^{-4} p_2^{-5} + p_2^{11} q_1^{11/2} p_1^{-12}$\end{center}}
		\end{tabular}
	\end{psmallmatrix},\\	
\end{gathered}
\end{equation}

\begin{equation}\label{t'_gon_cl}
\begin{gathered}		
	\tau' =\begin {psmallmatrix} 
	\begin{tabular}{p{1.5cm} p{3cm} p{2.5cm}} \tiny{
			\begin{center}$p_1^{12} p_2^{11} q_1^{11/2}$ \end{center}}& \tiny{\begin{center}$4 p_1^6 p_2^{11/2}$ \end{center}}& \tiny{\begin{center}$2 p_2^3 q_1^{-5/2}$ \end{center}}\\
		\tiny{\begin{center}$p_1^6 p_2^{11/2}$ \end{center}}&\tiny{ \begin{center}$2p_1^{-2}p_2^{-5/2} + 2p_2^{11/2} p_1^{-6}$ \end{center}}&\tiny{ \begin{center}$p_1^{-2}p_2^{-5/2} + p_2^{11/2} p_1^{-6}$ \end{center}}\\
		\tiny{\begin{center}$p_2^3 q_1^{-5/2}$ \end{center}}&\tiny{ \begin{center}$2 p_1^{-2} p_2^{-5/2} + 2 p_2^{11/2} p_1^{-6}$ \end{center}}&\tiny{ \begin{center}$q_1^{11/2} p_1^{-4} p_2^{-5} + p_2^{11} q_1^{11/2} p_1^{-12}$\end{center}}\\
	\end{tabular}
\end{psmallmatrix},\\	
\end{gathered}
\end{equation}
Now we use Theorem \ref{T1} and Theorem \ref{T2} and write the results for the cyclically closed gonigendrial model
\begin{theorem}\label{S2}

In the thermodynamic limit, the free energy of the gonigendrial model with the Hamiltonian (\ref{hamiltonian_part_gonihedric_cl}) is represented as (\ref{free_energy_lambda}), the internal energy as (\ref{int_energy_lambda}), the heat capacity as (\ref{heat_capacity_lambda}), where $\lambda_{\max}$ is the largest root of the characteristic equation of the matrix $\tau$ (\ref{t_gon_cl}), and for the non-percolation probability, the relation is true (\ref{Percolation1}), where $\mu_{\max}$ is the largest root of the characteristic equation of the matrix $\tau'$ (\ref{t'_gon_cl}).

\end{theorem}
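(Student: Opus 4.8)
The plan is to obtain Theorem \ref{S2} as a direct corollary of Theorem \ref{T1} and Theorem \ref{T2}; the only substantive work is to verify that the specialized Hamiltonian (\ref{hamiltonian_part_gonihedric_cl}) is a legitimate instance of the general Hamiltonian (\ref{hamiltonian}-\ref{H_m}) and that its reduced transfer matrices coincide with (\ref{t_gon_cl}) and (\ref{t'_gon_cl}). First I would check that the prescribed coefficient relations assign definite values to each generating coupling $J_{\alpha^m}$ while leaving all other couplings zero, so that the elementary generating Hamiltonian is of the admitted even-spin type. Symmetrizing it via (\ref{H_m}) with the rotation operator $R^m_{\pi/2}$ then reproduces the classical symmetric form written in (\ref{hamiltonian_part_gonihedric_cl}). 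Because the Hamiltonian is built by this fourfold symmetrization, it is automatically invariant under the $\pi/2$ rotation, so the transfer matrix $\theta$ (\ref{teta_k,l}) commutes with the rotation matrix $D$ (\ref{D}), and the hypotheses of Theorem \ref{T1} are met.

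Second, I would compute the sixteen independent Boltzmann weights $\theta_{k,l}$ as monomials in the parameters $p_1=\exp(J_{\{t_0^m,t_0^{m+1}\}})$, $p_2=\exp(J_{\{t_0^m,t_1^{m+1}\}})$ and $q_1=\exp(J_{\{t_0^m,t_1^m,t_0^{m+1},t_1^{m+1}\}})$. Substituting the relations linking $J_1,J_2,J_3$ to the couplings, each weight becomes an explicit power product such as $p_1^{12}p_2^{11}q_1^{11/2}$; the half-integer exponents arise precisely from the $3/4$ and $3/8$ factors carried by the diagonal next-nearest-neighbour and plaquette terms of (\ref{hamiltonian_part_gonihedric_cl}). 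Grouping these weights according to the block structure dictated by the form of the Perron--Frobenius eigenvector established in the proof of Theorem \ref{T1} collapses the $16\times16$ matrix $\theta$ onto the $4\times4$ matrix $\tau$ of (\ref{tau}); performing the summation entry by entry yields exactly (\ref{t_gon_cl}). Once this identification is in place, formulas (\ref{free_energy_lambda}), (\ref{int_energy_lambda}) and (\ref{heat_capacity_lambda}) of Theorem \ref{T1} deliver the free energy, internal energy and heat capacity, with $\lambda_{\max}$ the largest root of the quartic characteristic polynomial of $\tau$.

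Third, for the percolation statement I would repeat the reduction with the non-percolation transfer matrix $\theta'$, obtained from $\theta$ by zeroing the entries corresponding to the forbidden column and row configurations of $-1$. Since $\theta'$ still commutes with $D$, Theorem \ref{T2} applies and reduces the eigenvalue problem to the $3\times3$ matrix $\tau'$ of (\ref{tau'}); summing the surviving weights reproduces (\ref{t'_gon_cl}), and the limit relation (\ref{Percolation1}) follows with $\mu_{\max}$ the largest root of the cubic characteristic polynomial of $\tau'$.

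The main obstacle is purely the computational bookkeeping: one must correctly enumerate all even-spin interactions in the cube $\Omega^m_C$ after the $\pi/2$ symmetrization, track the sign of every term in the exponent for each of the sixteen distinct spin patterns, and collect the resulting monomials into the correct block sums. The delicate points are the half-integer exponents, which require keeping the $3/4$ and $3/8$ normalization factors exact throughout, and confirming that the chosen couplings genuinely respect the rotational symmetry, so that the collapse to $\tau$ and $\tau'$ is valid and Theorems \ref{T1} and \ref{T2} may be invoked verbatim.
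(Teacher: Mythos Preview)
Your proposal is correct and matches the paper's approach exactly: the paper states Theorem~\ref{S2} as a direct corollary of Theorems~\ref{T1} and~\ref{T2}, with the only content being the specialization of the general $\tau$ and $\tau'$ (formulas (\ref{tau}) and (\ref{tau'})) to the matrices (\ref{t_gon_cl}) and (\ref{t'_gon_cl}) via the coupling relations, after which the general formulas (\ref{a})--(\ref{d}), (\ref{a'})--(\ref{c'}), (\ref{poly4})--(\ref{poly4_sol4}) and (\ref{poly3}) are invoked verbatim. Your outline is in fact more explicit than the paper, which offers no proof environment for this theorem and simply asserts the reduction.
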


	The coefficients of the characteristic polynomials for $\tau$ (\ref{t_gon}) and $\tau'$ (\ref{t'_gon}) are written out in the formulas (\ref{a})-(\ref{d}) and (\ref{a'})-(\ref{c'}). $\lambda_{max}$ is found using the formulas (\ref{poly4}-\ref{poly4_sol4}), and $\mu_{max}$ is found using the formula (\ref{poly3}).

	\begin{primer}\label{ex4} Let us find the heat capacity of the closed gonigendrial model for $J_{1}=1$, $J_{2}=r$, $J_{3}=0$, $T\in[0.1,5]$, $r\in[- 2,2]$ in the absence of an external magnetic field. One can see the nucleation of phase transition lines 
		, starting at $T>0$, $r=-0.25$ on the heat capacity plot (Fig. 9), as was observed for $3\times3\times\infty$ in the \ref{ex3} example.
	
	\begin{center}
		\begin{figure}[h]
			\includegraphics[scale=1]{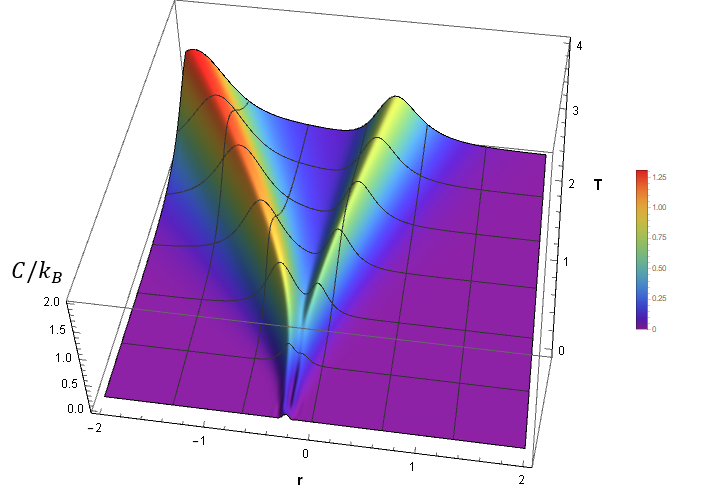}
			
			FIGURE 9. The heat capacity of a closed gonigendrial model for an example \ref{ex4}.
		\end{figure}
	\end{center}

	\begin{center}
		\begin{figure}[h]
			\includegraphics[scale=1]{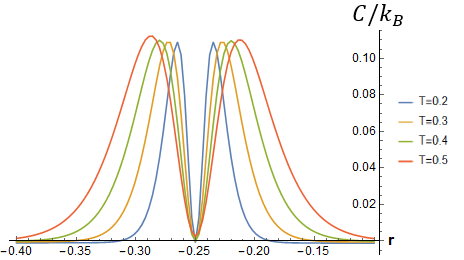}\\
			FIGURE 10. The heat capacity of a closed gonigendrial model for an example \ref{ex4} for $T=0.2$, $T=0.3$, $T=0.4$, $T=0.5$.
		\end{figure}
	\end{center}

Here one can already see a great similarity with the phase pattern for the gonigendrial model with similar interactions in the work \cite{Pelizzola}.

\end{primer}

	\section{Antiferromagnetic layered Ising model }\label{NN_NNN_interactions}
	
	We consider the Hamiltonian that takes into account only the interactions of nearest neighbors and interactions of next nearest neighbors, and $ 2 J_{\{t_0^m,t_1^m\}}= 2 J_{\{t_0^{m+1},t_1^{m+1}\}}=J_{\{t_0^{m},t_0^{m+1}\}}=J_{1}$, $ 4 J_{\{t_0^m,t_2^m\}}= 4 J_{\{t_0^{m+1},t_2^{m+1}\}}=J_{\{t_0^{m},t_1^{m+1}\}}=J_{\{t_0^{m},t_3^{m+1}\}}=J_{2}$,and the rest of the coefficients are zero. Numerical calculations based on the replica algorithm by the Monte Carlo method for a lattice model with such a Hamiltonian are made in the work \cite{Ramazanov}. We write the Hamiltonian (\ref{hamiltonian}) for this case as 
	
	\begin{equation}\label{hamiltonian_anti}
		\begin{gathered} 
				\mathcal H(\sigma)= -\sum_{m=0}^{M-1}(
			J_{1}((\sigma_0^m\sigma_1^m+\sigma_1^m\sigma_2^m+\sigma_2^m\sigma_3^m+\sigma_3^m\sigma_0^m+\sigma_0^{m+1}\sigma_1^{m+1}+\\
			\sigma_1^{m+1}\sigma_2^{m+1}+\sigma_2^{m+1}\sigma_3^{m+1}+\sigma_3^{m+1}\sigma_0^{m+1})/2+\sigma_0^m\sigma_0^{m+1}+\sigma_1^m\sigma_1^{m+1}+\\
			\sigma_2^m\sigma_2^{m+1}+\sigma_3^m\sigma_3^{m+1}) + J_{2}((\sigma_0^m\sigma_2^m+\sigma_1^m\sigma_3^m+\sigma_0^{m+1}\sigma_2^{m+1}+\\
			\sigma_1^{m+1}\sigma_3^{m+1})/2+\sigma_0^m\sigma_1^{m+1}+\sigma_1^m\sigma_2^{m+1}+
			\sigma_2^m\sigma_3^{m+1}+\sigma_3^m\sigma_0^{m+1}+\sigma_0^m\sigma_3^{m+1}+\\
			\sigma_1^m\sigma_0^{m+1}+
			\sigma_2^m\sigma_1^{m+1}+\sigma_3^m\sigma_2^{m+1})).					    			
		\end{gathered} 	
	\end{equation}

	It can be seen that the considering case can be obtained from the Hamiltonian (\ref{hamiltonian_part_gonihedric}), by setting $J_3=0$.
	
	Let us use Theorem \ref{T1} and Theorem \ref{T2}, and write 
	results for antiferromagnetic layered Ising model.

\begin{theorem}\label{S3}

	In the thermodynamic limit, the free energy of the model with the Hamiltonian (\ref{hamiltonian_anti}) 
  can be represent in the form (\ref{free_energy_lambda}), the internal energy can be represented in the form (\ref{int_energy_lambda}), the heat capacity can be represented in the form (\ref{heat_capacity_lambda}), wherein  $\lambda_{\max}$ is the largest root of the equation (\ref{poly4}),
	where 
	\begin{multline}{\label{a_part1}}
	a=-2 - p_1^{-4} - p_1^4 - 2 p_2^{-6} - 2 p_2^{-4} - 2 p_2^{-2} - 
	p_1^{-4} p_2^{-2} - p_1^4 p_2^{-2} - 2 p_2^4 - p_2^{10} p2^{-8} -\\ p2^8 p4^{10},
\end{multline}
\begin{multline}{\label{b_part1}}
	b=-8 p_1^{-4 }-
	8 p_1^{4 }+
	1 p_2^{-12 }+ 
	4 p_2^{-10 }+ 
	4 p_2^{-8 }+
	2 p_1^{-4 }p_2^{-8 }+ 
	2 p_1^{4}p_2^{-8 }+
	4 p_1^{-4 }p_2^{-6 }+\\
	4 p_1^{4}p_2^{-6 }-
	6 p_2^{-2 }+
	1 p_1^{-8 }p_2^{-2 }-
	4 p_1^{-4 }p_2^{-2 }- 
	4  p_1^{4}p_2^{-2 }+
	p_1^{8}p_2^{-2 }+
	4  p_2^{2 }-
	6  p_2^{2}p_1^{-4 }- 
	6  p_1^{4 }p_2^{2 }- \\
	4  p_2^{4 }+
	p_2^{4}p_1^{-8 }+
	p_1^{8 }p_2^{4 }+
	2 p_2^{6}p_1^{-8 }+
	2 p_1^{8 }p_2^{6 }+
	p_2^{8}p_1^{-12 }+ 
	2 p_2^{8}p_1^{-8 }+
	p_2^{8}p_1^{-4 }+ 
	p_1^{4 }p_2^{8 }+
	2 p_1^{8 }p_2^{8 }+\\ 
	p_1^{12 }p_2^{8 }+ 
	p_2^{10}p_1^{-12 }-
	2 p_2^{10}p_1^{-8 }+
	p_2^{10}p_1^{-4 }+
	p_1^{4 }p_2^{10 }-
	2 p_1^{8 }p_2^{10 }+
	p_1^{12 }p_2^{10 }+
	2 p_2^{14}p_1^{-8 }+
	2 p_1^{8 }p_2^{14 }+\\
	p_2^{20},
\end{multline}
\begin{multline}{\label{c_part1}}
	c=40 +
	14p_1^{-8 }+
	32p_1^{-4 }+
	32 p_1^{4 }+ 
	14 p_1^{8 }-
	2p_2^{-16 }-
	2p_2^{-14 }-
	1p_1^{-4 }p_2^{-14 }-
	p_1^{4}p_2^{-14 }-\\
	2p_2^{-12 }-
	5p_1^{-4 }p_2^{-12 }- 
	5 p_1^{4}p_2^{-12 }+
	34p_2^{-8 }-
	2p_1^{-8 }p_2^{-8 }+
	16p_1^{-4 }p_2^{-8 }+
	16 p_1^{4}p_2^{-8 }-\\
	2 p_1^{8}p_2^{-8 }+
	8p_1^{-4 }p_2^{-6 }+
	8 p_1^{4}p_2^{-6 }-
	88p_2^{-4 }-
	12p_1^{-4 }p_2^{-4 }-
	12 p_1^{4}p_2^{-4 }-
	24 p_2^{2 }-
	p_2^{2}p_1^{-12 }-\\
	2 p_2^{2}p_1^{-8 }+
	3 p_2^{2}p_1^{-4 }+ 
	3 p_1^{4 }p_2^{2 }-
	2 p_1^{8 }p_2^{2 }-
	p_1^{12 }p_2^{2 }+
	8 p_2^{4 }-
	3 p_2^{4}p_1^{-12 }-
	30 p_2^{4}p_1^{-8 }-
	31 p_2^{4}p_1^{-4 }-\\
	31 p_1^{4 }p_2^{4 }-
	30 p_1^{8 }p_2^{4 }-
	3 p_1^{12 }p_2^{4 }+
	6 p_2^{8 }-
	p_2^{8}p_1^{-16 }+
	8 p_2^{8}p_1^{-12 }+
	16 p_2^{8}p_1^{-8 }+ 
	8 p_2^{8}p_1^{-4 }+
	8 p_1^{4 }p_2^{8 }+\\
	16 p_1^{8 }p_2^{8 }+
	8 p_1^{12 }p_2^{8 }-
	p_1^{16 }p_2^{8 }+
	8 p_2^{10}p_1^{-4 }+
	8 p_1^{4 }p_2^{10 }-
	16 p_2^{12 }-
	2 p_2^{12}p_1^{-12 }-
	4 p_2^{12}p_1^{-8 }+
	6 p_2^{12}p_1^{-4 }+\\
	6 p_1^{4 }p_2^{12 }-
	4 p_1^{8 }p_2^{12 }-
	2 p_1^{12 }p_2^{12 }-
	2 p_2^{16 }-
	2 p_2^{18 }-
	p_2^{18}p_1^{-4 }-
	p_1^{4 }p_2^{18 }+
	6 p_2^{20 }-
	p_2^{20}p_1^{-4 }- 
	p_1^{4 }p_2^{20 }-\\
	2 p_2^{24},
\end{multline}
\begin{multline}{\label{d_part1}}
	d=4p_2^{-18 }+
	2p_1^{-4 }p_2^{-18 }+ 
	2 p_1^{4}p_2^{-18 }- 
	26p_2^{-14 }+
	p_1^{-8 }p_2^{-14 }-
	12p_1^{-4 }p_2^{-14 }-
	12 p_1^{4}p_2^{-14 }+ \\
	p_1^{8}p_2^{-14 }+
	68p_2^{-10 }+
	18p_1^{-4 }p_2^{-10 }+
	18 p_1^{4}p_2^{-10 }- 
	64p_2^{-6 }-
	16p_1^{-8 }p_2^{-6 }+
	16p_1^{-4 }p_2^{-6 }+ 
	16 p_1^{4}p_2^{-6 }- \\
	16 p_1^{8}p_2^{-6 }- 
	80p_2^{-2 }+
	2 p_1^{-12 }p_2^{-2 }+
	36p_1^{-8 }p_2^{-2 }-
	54p_1^{-4 }p_2^{-2 }- 
	54 p_1^{4}p_2^{-2 }+ 
	36 p_1^{8}p_2^{-2 }+ \\
	2 p_1^{12}p_2^{-2 }+ 
	266 p_2^{2 }+
	p_2^{2}p_1^{-16 }-
	12 p_2^{2}p_1^{-12 }- 
	14 p_2^{2}p_1^{-8 }+ 
	4 p_2^{2}p_1^{-4 }+ 
	4 p_1^{4 }p_2^{2 }- 
	14 p_1^{8 }p_2^{2 }-\\ 
	12 p_1^{12 }p_2^{2 }+
	p_1^{16 }p_2^{2 }-
	80 p_2^{6 }+
	2 p_2^{6}p_1^{-12 }+
	36 p_2^{6}p_1^{-8 }- 
	54 p_2^{6}p_1^{-4 }-
	54 p_1^{4 }p_2^{6 }+
	36 p_1^{8 }p_2^{6 }+
	2 p_1^{12 }p_2^{6 }- \\
	64 p_2^{10 }-
	16 p_2^{10}p_1^{-8 }+
	16 p_2^{10}p_1^{-4 }+
	16 p_1^{4 }p_2^{10 }- 
	16 p_1^{8 }p_2^{10 }+ 
	68 p_2^{14 }+
	18 p_2^{14}p_1^{-4 }+
	18 p_1^{4 }p_2^{14 }-\\
	26 p_2^{18 }+
	p_2^{18}p_1^{-8 }-
	12 p_2^{18}p_1^{-4 }-
	12 p_1^{4 }p_2^{18 }+
	p_1^{8 }p_2^{18 }+
	4 p_2^{22 }+
	2 p_2^{22}p_1^{-4 }+ 
	2 p_1^{4 }p_2^{22},	
\end{multline}
and for the non-percolation probability, the relation is true (\ref{Percolation1}), where $\mu_{\max}$ is the largest root of the equation (\ref{t'_gon}), wherein
\begin{equation}{\label{a'_part1}}
	a'=-1 - p_1^4 - p_2^{-6} - 2 p_2^4 - p_2^{10} p_1^{-8} - p_1^8 p_2^{10},
\end{equation}
\begin{multline}{\label{b'_part1}}
	b'=-p_2^{-6} + p_1^4 p_2^{-6} + 2 p_2^{-2} - 4 p_2^2 p_1^{-4} - 2 p_2^4 + 
	p_1^8 p_2^4 - p_2^{10} p_1^{-8} + p_2^{10} p_1^{-4} - 3 p_1^8 p_2^{10} + p_1^{12} p_2^{10} +\\ 
	2 p_2^{14} p_1^{-8} + 2 p_1^8 p_2^{14} + p_2^{20},
\end{multline}
\begin{multline}\label{c'_part1}
	c'=2 p_2^4 - 6 p_1^4 p_2^4 + 5 p_1^8 p_2^4 - p_1^{12} p_2^4 + 4 p_2^8 - 
	2 p_1^8 p_2^8 - 8 p_2^{12} + 4 p_1^4 p_2^{12} + 5 p_2^{20} -\\ p_1^4 p_2^{20} - 2 p_2^{24},
\end{multline}

\end{theorem}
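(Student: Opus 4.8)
The plan is to recognize the antiferromagnetic layered model as the degenerate case of the gonigendrial model treated in Theorem~\ref{S1}, and then to import the eigenvalue machinery of Theorems~\ref{T1} and~\ref{T2} directly. As already noted below (\ref{hamiltonian_anti}), the Hamiltonian (\ref{hamiltonian_anti}) is obtained from (\ref{hamiltonian_part_gonihedric}) by setting $J_3=0$. Since the plaquette activity is $q_1=\exp(J_{\{t_0^m,t_1^m,t_0^{m+1},t_1^{m+1}\}})=\exp(J_3)$, the condition $J_3=0$ is precisely the substitution $q_1=1$ in every expression already derived for the gonigendrial family.

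First I would invoke Theorem~\ref{T1}: the free energy, internal energy and heat capacity in the thermodynamic limit are given by (\ref{free_energy_lambda}), (\ref{int_energy_lambda}) and (\ref{heat_capacity_lambda}), with $\lambda_{\max}$ the largest eigenvalue of the $4\times4$ matrix $\tau$ from (\ref{tau}). For the gonigendrial family this matrix was already assembled as (\ref{t_gon}). Putting $q_1=1$ in (\ref{t_gon}) and simplifying each entry yields the transfer matrix of the antiferromagnetic model; the coefficients $a,b,c,d$ of its characteristic polynomial (\ref{poly4}) are then read off from the determinantal formulas (\ref{a})--(\ref{d}). The outcome of this substitution is exactly (\ref{a_part1})--(\ref{d_part1}).

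For the percolation statement I would run the parallel argument with Theorem~\ref{T2}. The relation (\ref{Percolation1}) holds with $\mu_{\max}$ the largest root of the characteristic polynomial (\ref{char_poly'}) of the $3\times3$ matrix $\tau'$ from (\ref{tau'}); for the gonigendrial model this is (\ref{t'_gon}). Setting $q_1=1$ there and feeding the entries into (\ref{a'})--(\ref{c'}) produces the coefficients (\ref{a'_part1})--(\ref{c'_part1}). A quick sanity check is that $a'=-(\tau'_{1,1}+\tau'_{2,2}+\tau'_{3,3})=-1-p_1^4-p_2^{-6}-2p_2^4-p_2^{10}p_1^{-8}-p_1^8p_2^{10}$, which agrees with (\ref{a'_part1}).

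The conceptual content is thus entirely carried by Theorems~\ref{T1} and~\ref{T2}, and the only labor is symbolic. The main obstacle will be the bookkeeping in the fourth-order coefficients $c$ and $d$: these are the full $4\times4$ determinantal sums (\ref{c}) and (\ref{d}), so after the $q_1=1$ substitution they expand into the many monomials displayed in (\ref{c_part1}) and (\ref{d_part1}). Collecting like powers of $p_1$ and $p_2$ by hand is error-prone, so I would carry out the expansion and the cancellations in a computer algebra system and verify that the simplified totals coincide with the stated formulas.
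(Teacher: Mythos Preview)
Your approach is exactly the paper's: the authors note that (\ref{hamiltonian_anti}) is the $J_3=0$ specialization of (\ref{hamiltonian_part_gonihedric}), invoke Theorems~\ref{T1} and~\ref{T2}, and state that the coefficients come from plugging the entries of (\ref{t_gon}) and (\ref{t'_gon}) (with $q_1=1$) into the general formulas (\ref{a})--(\ref{d}) and (\ref{a'})--(\ref{c'}). Your write-up is in fact more explicit than the paper's, which gives no further justification beyond this reference.
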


	The coefficients of the characteristic polynomials for $\tau$ (\ref{t_gon}) and $\tau'$ (\ref{t'_gon}) are written in the formulas (\ref{a})-(\ref{d}) and (\ref{a'})-(\ref{c'}). $\lambda_{max}$ is found using the formulas (\ref{poly4}-\ref{poly4_sol4}), аnd $\mu_{max}$ is found using the formula (\ref{poly3}).
	
	\begin{primer}\label{ex5} Let us find the heat capacity for $J_{1}=-1$, $J_{2}=-r$, $T\in[0.2,0.5]$, $r\in[0,1]$ in the absence of an external magnetic field. Fig. 11 shows the nucleation of lines of a phase transition of the 1st order at $r=0.65$.

	\begin{figure}[h]
		\includegraphics[scale=1]{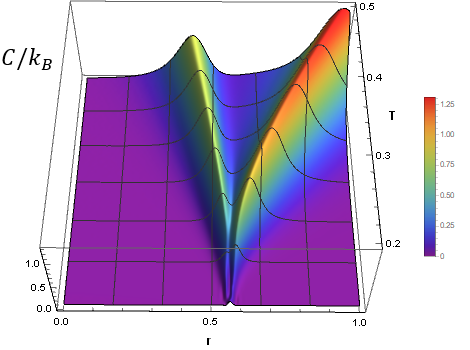}
		\\FIGURE 11. The heat capacity graph for the example \ref{ex5}
		\label{fig:C_part2}
	\end{figure}
	
		\end{primer}
	
	In the article \cite{Ramazanov} for the considered case, a phase diagram is constructed, which has a similar form, but the phase transition line comes from $r=0.5$, and not from $r=0.65$, as obtained in this paper. The shift can be explained by the influence of boundary conditions (the authors consider free boundary conditions), which have a strong effect on the model in the considered volume $ 2\times2\times\infty$.
	
	Note that the exact solution for the anisotropic Hamiltonian, which takes into account only the interaction of nearest neighbors, was obtained in \cite{Yurishchev}. This solution coincides with the results of this paper if we assume isotropy $J_x=J_y=J_z=J_{1}$ in the \cite{Yurishchev} model  and put $J_{2}=0$ in the Hamiltonian (\ref{hamiltonian_anti}). Then the coefficients of the 4th degree polynomial (\ref{poly4}) have the form:
	
	\begin{equation}{\label{a_yurishchev}}
		a=-10 -4\cosh(4 J_{1}) - 2\cosh(8 J_{1}),
	\end{equation}
	\begin{equation}{\label{b_yurishchev}}
		b=4 - 20\cosh(4 J_{1}) + 12\cosh(8 J_{1}) + 4\cosh(12 J_{1}),
	\end{equation}
	\begin{multline}\label{c_yurishchev}
		c=-46 + 60\cosh(4 J_{1}) - 16\cosh(8 J_{1}) + 4\cosh(12 J_{1}) -\\ 2\cosh(16 J_{1}),
	\end{multline}	
	\begin{multline}\label{d_yurishchev}
		d=70 - 112\cosh(4 J_{1}) + 56\cosh(8 J_{1}) - 16\cosh(12 J_{1}) +\\ 2\cosh(16 J_{1}).
	\end{multline}
	
	The coefficients (\ref{a_yurishchev})-(\ref{d_yurishchev}) coincide with the coefficients of the polynomial in \cite{Yurishchev}, and hence the largest eigenvalue also coincides. In \cite{Yurishchev} the solution is written out taking into account the special form of the resulting polynomial, which allows one to decompose a fourth degree polynomial into two second degree polynomials.

	\section{Conclusion}\label{conclusion}
	Thus, the authors obtained the exact value in the thermodynamic limit of free energy, heat capacity, and non-percolation in the $2\times2\times\infty$ strip for a family of Hamiltonians that are invariant under rotation about the central symmetry axis by $\pi/2$. The problem is reduced to finding the root of the fourth degree polynomial, which is the largest eigenvalue of the transfer matrix. The value of the non - percolation probability of the model in the case of percolation, which has the same invariance, is found. Some types of Hamiltonians are considered for which the largest eigenvalue of the transfer matrix is the root of a quadratic trinomial or a linear equation. The authors note that other Hamiltonians can be found in a similar way, for which a similar expansion is possible. An exact value is obtained for the free energy and heat capacity for the $2\times2\times\infty$ gonigendrial model in the case of free boundary conditions and in the case of an analog of cyclic closure. The similarity of the numerical solution in the $3\times3\times\infty$ strip and the results of the work of other authors with the exact solution for the analog of cyclic closure in the $2\times2\times\infty$ strip indicates that the authors managed to choose the analog of cyclic closure correctly. The Hamiltonian is written out taking into account a special selection of interaction coefficients simulating cyclic closure. The exact value of the free energy and heat capacity in the thermodynamic limit in the $2\times2\times\infty$ strip is obtained for the Antiferromagnetic layered Ising model, the results are compared with the exact solution in the work \cite{Yurishchev_1997}.


\begin{thebibliography}{bibliography}
		
		\bibitem{Ising}
		\newblock Ising E.,
		\newblock Beitrag zur Theorie des Ferromagnetismus,
		\newblock \emph{Zeitschrift fűr Physik}, \textbf{31} ( 1925), 253-258.
		
		\bibitem{Cappi}
		\newblock Cappi A., Colangelo P., Gonnella G., Maritan A.,
		\newblock Ensemble of interacting random surfaces ona lattice,
		\newblock \emph{Nucl. Phys.}, \textbf{B370} ( 1992), 659–694.
		
		\bibitem{Karowski}
		\newblock Karowski M.,
		\newblock Critical behaviour of random surfaces on the cubic lattice,
		\newblock \emph{Journal of Physics A: Mathematical and General}, \textbf{19(16)} ( 1986), 3375.
		
		\bibitem{Seto}
		\newblock Seto H., Schwahn D., Nagao M., Yokoi E., Komura S., Imai M., Mortensen K.,
		\newblock Crossover from mean field to three-dimensional Ising critical behavior in a three-component microemulsion system,
		\newblock \emph{Physical Review E}, \textbf{54(1)} ( 1996), 629.
		
		\bibitem{Ambartzumian}
		\newblock Ambartzumian R. V., Savvidy G. K., Savvidy K. G., Sukiasian G. S.,
		\newblock Alternative model of random surfaces,
		\newblock \emph{ Physics Letters B}, \textbf{275(1-2)} (1992), 99-102.
		
		
		
		\bibitem{Johnston_1996}
		\newblock Johnston D. A., Malmini R. P.,
		\newblock Gonihedric 3D Ising actions,
		\newblock \emph{ Physics Letters B}, \textbf{378(1-4)} (1996), 87-96.
		
			
		\bibitem{Pelizzola}
		\newblock Cirillo E.N.M., Gonnella G., Pelizzola A.,
		\newblock Critical behavior of the three-dimensional Ising model with nearest-neighbor, next-nearest-neighbor, and plaquette interactions,
		\newblock \emph{Physical Review E}, \textbf{55(1)} (1997). DOI: 10.1103/PhysRevE.55.R17
		

		\bibitem{dosAnjos}
		\newblock Rosana A. dos Anjos, J. Roberto Viana, J. Ricardo de Sousa, Plascak J. A.,
		\newblock Three-dimensional Ising model with nearest- and next-nearest-neighbor interactions,
		\newblock \emph{Physical Review E}, \textbf{76(2)} (2007). DOI: 10.1103/PhysRevE.76.022103
		
		
		\bibitem{Johnston}
		\newblock Johnston D.A., Mueller M., Janke W.,
		\newblock  Plaquette Ising models, degeneracy and scaling,
		\newblock \emph{The European Physical Journal Special Topics}, \textbf{226(4)} (2017), 749-764. DOI: 10.1140/epjst/e2016-60329-4
		
		
		\bibitem{Nishiyama}
		\newblock Nishiyama Y.,
		\newblock  Multicriticality of the (2+ 1)-dimensional gonihedric model: A realization of the (d, m)=(3, 2) Lifshitz point.,
		\newblock \emph{Physical Review E}, \textbf{75(5)} (2007), 051116. DOI: 10.1103/PhysRevE.75.051116
		
		
		 
		

		
		\bibitem{Onsager}
		\newblock Onsager L.,
		\newblock Crystal statistics. I. A two-dimensional model with an order–disorder transition,
		\newblock \emph{Physical Review}, \textbf{65(3-4)} ( 1944), 117-149. DOI: 10.1103/Phys-Rev.65.117
		
		

		\bibitem{Yurishchev}
		\newblock Yurishchev M. A.,
		\newblock Critical temperature of a fully anisotropic three-dimensional Ising model,
		\newblock \emph{J. Phys.: Condens. Matter}, \textbf{5(43)} (1993).
		
	
		\bibitem{Yurishchev_1997}
		\newblock Yurishchev M. A.,
		\newblock Critical finite-size-scaling amplitudes of a fully anisotropic three-dimensional Ising model,
		\newblock \emph{Physical Review E}, \textbf{55(4)} (1997).
		
		
			\bibitem{Ratner}
		\newblock Ratner I. M.,
		\newblock Eigenvalues of the Transfer Matrix of the Three-Dimensional Ising Model in the Particular Case n= m= 2,
		\newblock \emph{ Theoretical and Mathematical Physics}, \textbf{199(3)} (2019), 909-921.
		
		\bibitem{Krijanovskii}
		\newblock Kryzhanovskiy B. V., Litinskiy L. B.,
		\newblock Eigenvalues of the connection matrix of the Ising model: taking into account the interaction with neighbors following the nearest ones,
		\newblock \emph{ Reports of the Academy of Sciences. – Federal State Budgetary Institution "Russian Academy of Sciences"}, \textbf{489(3)} (2019), 246-249.
		
		
		
		
		
		\bibitem{Yokota_Terufumi_1988}
		\newblock Yokota T.,
		\newblock  Double-chain approximation for the Ising model ,
		\newblock \emph{Phys. Rev. B}, \textbf{38(1)} (1988), 638--640. DOI: 10.1103/PhysRevB.38.638
		
		
		
		
		
		\bibitem{Yokota_Terufumi_1989}
		\newblock Yokota T.,
		\newblock  Reentrant and successive phase transitions in the Ising model with competing interactions ,
		\newblock \emph{Phys. Rev. B}, \textbf{39(1)} (1989), 523--527. DOI: 10.1103/PhysRevB.39.523
		
		
	
		\bibitem{Yurishchev_2007}
		\newblock Yurishchev M. A.,
		\newblock Ising models on the 2x2x{$\infty$} lattices,
		\newblock \emph{Journal of Experimental and Theoretical Physics}, \textbf{104(3)} (2007), 461–466. DOI: 10.1134/S1063776107030120

		


\bibitem{Binder}
\newblock Binder K.,  Landau D. P.
\newblock Phase diagrams and critical behavior in Ising square lattices with nearest-and next-nearest-neighbor interactions,
\newblock \emph{Physical Review B}, \textbf{21(5)} (1980), 1941.



\bibitem{Kalz}
\newblock 	Kalz A.,  Honecker A.
\newblock Location of the Potts-critical end point in the frustrated Ising model on the square lattice,
\newblock \emph{Physical Review B}, \textbf{86(13)} (2012), 134410.

\bibitem{dos Anjos}
\newblock dos Anjos R. A., Viana J. R., de Sousa J. R.
\newblock Phase diagram of the Ising antiferromagnet with nearest-neighbor and next-nearest-neighbor interactions on a square lattice,
\newblock \emph{Physics Letters A}, \textbf{372(8)} (2008), 1180-1184.

\bibitem{Ramazanov}
\newblock 	Ramazanov M.K.,  Murtazaev A.K.,
\newblock  Phase transitions in the antiferromagnetic layered Ising model on a cubic lattice,
\newblock \emph{JETP letters}, \textbf{103(7)} (2016),460-464.



\bibitem{Coniglio}
\newblock Coniglio, A., Nappi, C. R., Peruggi, F.,  Russo, L.,
\newblock Percolation and phase transitions in the Ising model,
\newblock \emph{ Communications in Mathematical Physics}, \textbf{51(3)} (1976), 315-323.




\bibitem{Miyamoto}
\newblock Miyamoto,
\newblock A remark to Harris's theorem on percolation,
\newblock \emph{ Communications in Mathematical Physics}, \textbf{44(2)} (1975), 169-173.


\bibitem{Khrapov_P_V}
\newblock   Khrapov P.V.,
\newblock Percolation in a finite strip,
\newblock \emph{  Vestnik Moskovskogo Universiteta. Seriya 1. Matematika. Mekhanika.}, \textbf{4} (1985), 10-13. 



\bibitem{Minlos_R_A_Khrapov_P_V}
\newblock Minlos R. A.,  Khrapov P.V.,
\newblock Percolation in a finite strip for continuous systems,
\newblock \emph{  Vestnik Moskovskogo Universiteta. Seriya 1. Matematika. Mekhanika.}, \textbf{1} (1985), 56-60. 


\bibitem{Men'shikov_M_V_Molchanov_S_A_Sidorenko_A_F}
\newblock Men'shikov M. V., Molchanov S.A., and Sidorenko A.F.,
\newblock Percolation theory and some applications,
\newblock \emph{  Journal of Soviet Mathematics }, \textbf{42(4)} (1988), 1766-1810. 


\bibitem{Savvidy}
\newblock Savvidy G. K., Wegner F. J.
\newblock Geometrical string and spin systems,
\newblock \emph{ Nuclear Physics B}, \textbf{413(3)} (1994), 605-613.

\bibitem{Dimopoulos}
\newblock Dimopoulos P., Espriu D., Jané E., Prats A.,
\newblock Slow dynamics in the three-dimensional gonihedric model,
\newblock \emph{ Physical Review E}, \textbf{66(5)} (2002), 056112.




\bibitem{Khrapov1}
\newblock Khrapov P. V.,
\newblock Fourier Transform of Transfer Matrices of Plane Ising Models ,
\newblock \emph{Sovremennye informacionnye tehnologii i IT-obrazovanie
	= Modern Information Technologies and IT-Education }, \textbf{15(2)} (2019), 306-311. DOI: 10.25559/SITITO.15.201902.306-311 

\bibitem{Khrapov2}
\newblock Khrapov P. V.,
\newblock Disorder Solutions for Generalized Ising Model with Multispin Interaction,
\newblock \emph{Sovremennye informacionnye tehnologii i IT-obrazovanie
	= Modern Information Technologies and IT-Education}, \textbf{15(2)} (2019), 312-319. DOI: 10.25559/SITITO.15.201902.312-319


\bibitem{Khrapov_Golosov_Nikishina}
\newblock Khrapov P. V., Golosov A. O., Nikishina V. A.,
\newblock About the Double-stranded 
Ising Model with Multispin Interaction,
\newblock \emph{ In: Sukhomlin V., Zubareva E. (eds.) Proceedings of the 4th International scientific conference "Convergent Cognitive Information Technologies". Convergent 2019. Communications in Computer and Information Science. Springer, Cham}, (2019), (in press).






\bibitem{baxter2016}
\newblock Baxter, R.J.,
\newblock Exactly Solved Models in Statistical Mechanics,
\newblock \emph{Elsevier Science}, 2016

\bibitem{Horn}
\newblock Horn R. A., Johnson C. R., 
\newblock Matrix analysis,
\newblock \emph{ Cambridge university press}, (2012).

	\bibitem{Andrew}
\newblock Andrew A.L.,
\newblock  Eigenvectors of certain matrices,
\newblock \emph{Linear Algebra and its Applications}, \textbf{7(2)} (1973),151-162.
		
	
		
	\bibitem{Perron}
	\newblock Perron O.,
	\newblock Zur Theorie der Matrices,
	\newblock \emph{Mathematische Annalen}, \textbf{64(2)}  (1907), 248-263. DOI: 10.1007/BF01449896
		
		\bibitem{Cardano}
	\newblock Cardano G., Witmer T. R., Ore O.,
	\newblock The rules of algebra: Ars Magna, 
	\newblock \emph{Courier Corporation}, \textbf{685}, (2007). 
	
		\bibitem{Tabachnikov}
		\newblock Tabachnikov S.L., Fuchs D.,
		\newblock Mathematical divertissement. 30 lectures on classical mathematics,
		\newblock {Moscow: MTsNMO}, 2011, 512 pp.
		
		
	
		
	\end{thebibliography}
\end{document}